\newtheorem{theorem}{Theorem}
\newtheorem{lemma}[theorem]{Lemma}
\newtheorem{proposition}[theorem]{Proposition}
\newtheorem{corollary}[theorem]{Corollary}
\newtheorem{claim}[theorem]{Claim}
\newtheorem{fact}[theorem]{Fact}
\newtheorem{definition}[theorem]{Definition}
\newtheorem{question}{Question}
\newcommand{\wt}[1]{{\widetilde{#1}}}
\newcommand{\R}{\mathbb{R}}
\newcommand{\E}{\mathbb{E}}
\newcommand{\poly}{\mathrm{poly}}
\newcommand{\dtv}{d_{\mathrm TV}}
\newcommand{\dhel}{d_{\mathrm H}}
\newcommand{\Cov}{\mathrm{Cov}}
\newcommand{\diag}{\mathrm{diag}}
\newcommand{\ignore}[1]{}
\newcommand{\eps}{\epsilon}
\DeclarePairedDelimiter{\ceil}{\lceil}{\rceil}
\newcommand{\CondTable}{\Gamma}
\newcommand{\yucomment}[1]{{}}
\newcommand{\eqdef}{\stackrel{{\mathrm {\scriptstyle def}}}{=}}
\title{Robust Learning of Fixed-Structure Bayesian Networks}
\author{
  Yu Cheng\\
  Duke University\\
  \texttt{yucheng@cs.duke.edu}\\
\and
  Ilias Diakonikolas\\
  University of Southern California\\
  \texttt{ilias.diakonikolas@gmail.com}\\
\and
  Daniel M. Kane\\
  University of California, San Diego\\
  \texttt{dakane@ucsd.edu}
\and
  Alistair Stewart\\
  University of Southern California\\
  \texttt{stewart.al@gmail.com}\\
}
\begin{document}

\maketitle

\begin{abstract}
We investigate the problem of learning Bayesian networks in a robust model
where an $\eps$-fraction of the samples are adversarially corrupted.
In this work, we study the fully observable
discrete case where the structure of the network is given.
Even in this basic setting, previous learning algorithms
either run in exponential time or lose dimension-dependent factors in their
error guarantees.
We provide the first computationally efficient robust learning algorithm for this problem
with dimension-independent error guarantees. Our algorithm has near-optimal sample complexity,
runs in polynomial time, and achieves error that scales nearly-linearly with the fraction
of adversarially corrupted samples. Finally, we show on both synthetic and semi-synthetic 
data that our algorithm performs well in practice.
\end{abstract}

\makeatletter{}
\section{Introduction}  \label{sec:intro}

Probabilistic graphical models~\cite{Koller:2009} provide an appealing and unifying formalism to succinctly
represent structured high-dimensional distributions. The general problem of inference in
graphical models is of fundamental importance and arises in many applications across several scientific disciplines,
see~\cite{Wainwright:2008} and references therein.
In this work, we study the problem of learning graphical models from data~\cite{Neapolitan:2003, RQS11}.
There are several variants of this general learning problem depending
on: (i) the precise family of graphical models considered (e.g., directed, undirected),
(ii) whether the data is fully or partially observable, and (iii) whether the structure of the underlying
graph is known a priori or not (parameter estimation versus structure learning).
This learning problem has been studied extensively along these axes
during the past five decades, (see, e.g.,~\cite{Chow68, Dasgupta97, Abbeel:2006, WainwrightRL06, AnandkumarHHK12,
SanthanamW12, LohW12, BreslerMS13, BreslerGS14a, Bresler15}) 
resulting in a beautiful theory and a collection of algorithms in various settings.

The main vulnerability of all these algorithmic techniques is that they crucially rely on the assumption
that the samples are precisely generated by a graphical model in the given family. This simplifying
assumption is inherent for known guarantees in the following sense: if there exists even a very small fraction
of arbitrary outliers in the dataset, the performance of known algorithms can be totally compromised.
It is important to explore the natural setting when the aforementioned assumption holds only in an approximate sense.
Specifically, we study the following broad question:

\begin{question}[Robust Learning of Graphical Models]
\label{question}
Can we efficiently learn graphical models when a constant fraction of the samples are corrupted, or equivalently, when the model is slightly misspecified?
\end{question}

In this paper, we focus on the model of corruptions considered in~\cite{DiakonikolasKKL16} (Definition~\ref{def:adv})
which generalizes many other existing models, including Huber's contamination model~\cite{Huber64}.
Intuitively, given a set of good samples (from the true model), an 
adversary is allowed to inspect the samples before corrupting them,
both by adding corrupted points and deleting good samples. In contrast, in Huber's model,
the adversary is oblivious to the samples and is only allowed to add bad points.

We would like to design robust learning algorithms for Question~\ref{question} whose
sample complexity, $N$, is close to the information-theoretic minimum,
and whose computational complexity is polynomial in $N$. We emphasize that the crucial requirement
is that the error guarantee of the algorithm is {\em independent} of the dimensionality $d$ of the problem.

\subsection{Formal Setting and Our Results}  \label{ssec:results}

In this work, we study Question~\ref{question} in the context of {\em Bayesian networks}~\cite{Jensen:2007}.
We focus on the fully observable case when the underlying network is given.
In the {\em non-robust} setting, this learning problem is straightforward:
the ``empirical estimator'' (which coincides with the maximum likelihood estimator)
is known to be sample and computationally efficient~\cite{Dasgupta97}.
In sharp contrast, even this most basic regime is surprisingly challenging in the robust setting.
For example, the very special case of robustly learning a
Bernoulli product distribution (corresponding to an empty network with no edges)
was analyzed only recently in~\cite{DiakonikolasKKL16}.

To formally state our results, we first give a detailed description of the corruption model we study. \begin{definition}[$\eps$-Corrupted Samples]
\label{def:adv}
Given $0 < \eps < 1/2$ and a distribution family $\mathcal{P}$, 
the algorithm specifies some number of samples $N$, and $N$ samples $X_1, X_2, \ldots, X_N$ are drawn from some (unknown) ground-truth $P \in \mathcal{P}$.
The adversary is allowed to inspect $P$ and the samples, and replaces $\eps N$ of them with arbitrary points. The set of $N$ points is then given to the algorithm.
We say that a set of samples is $\eps$-corrupted if it is generated by this process. \end{definition}

\paragraph{Bayesian Networks.}
Fix a directed acyclic graph, $G$, whose vertices are labelled $[d] \eqdef \{1,2,\ldots,d\}$
in topological order (every edge points from a vertex with a smaller index to one with a larger index).
We will denote by $\text{Parents}(i)$ the set of parents of node $i$ in $G$.
A probability distribution $P$ on $\{0,1\}^d$ is defined to be 
a \emph{Bayesian network} (or \emph{Bayes net}) with graph $G$
if for each $i \in [d]$, we have that $\Pr_{X\sim P}\left[X_i = 1 \mid X_1,\ldots,X_{i-1}\right]$ 
depends only on the values $X_j$ where $j \in \text{Parents}(i)$.
Such a distribution $P$ can be specified by
its {\em conditional probability table}. 
\begin{definition}[Conditional Probability Table of Bayesian Networks]
Let $P$ be a Bayesian network with graph $G$.
Let $\CondTable$ be the set $\{(i,a): i \in [d], a\in \{0,1\}^{|\mathrm{Parents}(i)|}\}$.
Let $m=|\CondTable|$. For $(i,a)\in \CondTable$, the \emph{parental configuration} $\Pi_{i,a}$
is defined to be the event that $X_{\mathrm{Parents}(i)}=a$.
The conditional probability table $p\in [0,1]^m$ of $P$ is given by
$p_{i,a}=\Pr_{X\sim P}\left[X_i=1 \mid \Pi_{i,a}\right].$
\end{definition}
Note that $P$ is determined by $G$ and $p$.
We will frequently index $p$ as a vector.
We use the notation $p_k$ and the associated events $\Pi_k$, where each $k \in [m]$ stands for an $(i,a) \in \CondTable$ lexicographically ordered.

\paragraph{Our Results.}
We give the first efficient robust learning algorithm for Bayesian networks with a known graph $G$.
Our algorithm has information-theoretically near-optimal sample complexity, runs in time polynomial in the size of the input (the samples),
and provides an error guarantee that scales near-linearly with the fraction
of adversarially corrupted samples, under the following restrictions: 
First, we assume that each parental configuration is reasonably likely. 
Intuitively, this assumption seems necessary
because we need to observe each configuration many times in order to learn
the associated conditional probability to good accuracy. Second, we assume
that each of the conditional probabilities is balanced, i.e., bounded away from $0$ and $1$.
This assumption is needed for technical reasons. In particular, we need this to show that
a good approximation to the conditional probability table implies that the corresponding Bayesian network is close in total variation distance.

Formally, we say that a Bayesian network is \emph{$c$-balanced}, for some $c>0$,
if all coordinates of the corresponding conditional probability
table are between $c$ and $1-c$. Throughout the paper, we use $m = \sum_{i=1}^d 2^{|\mathrm{Parents}(i)|}$ for the size of the conditional probability table of $P$, and $\alpha$ for the minimum probability of parental configuration of $P$: $\alpha = \min_{(i, a) \in S} \Pr_P[\Pi_{i,a}]$.
We now state our main result.

\begin{theorem}[Main] \label{thm:bal-bn}
Fix $0 < \eps < 1/2$. Let $P$ be a $c$-balanced Bayesian network on $\{0, 1\}^d$ with known structure $G$.
Assume $\alpha \ge \Omega(\eps\sqrt{\log(1/\eps)} / c)$.
Let $S$ be an $\eps$-corrupted set of 
$N = \wt{\Omega}(m\log(1/\tau)/\eps^2)$ samples from $P$.~\footnote{Throughout the paper, we use $\wt O(f)$ to denote $O(f \, \poly\!\log(f))$.}
Given $G, \eps, \tau$, and $S$,
we can compute a Bayesian network $Q$ such that, with probability at least $1-\tau$, 
$\dtv(P,Q) \leq \eps\sqrt{\ln(1/\eps)}/ (\alpha c)$.
Our algorithm runs in time $\wt O(N d^2 / \eps)$.
\end{theorem}

Our algorithm is given in Section~\ref{sec:alg}.
We first note that the sample complexity of our algorithm is near-optimal for learning Bayesian networks with known structure.
The following sample complexity lower bound holds even without corrupted samples:

\begin{fact}[Sample Complexity Lower Bound, \cite{CanonneDKS17}]
\label{fact:sample-lb}
Let $\mathcal{BN}_{d,f}$ denote the family of Bernoulli Bayesian networks on $d$ variables such that every node has at most $f$ parents.
The worst-case sample complexity of learning $\mathcal{BN}_{d,f}$, within total variation distance $\eps$ and with probability $9/10$, is $\Omega(2^f \cdot d / \eps^2)$ for all $f \le d/2$ when the graph structure is known.
\end{fact}

Consider Bayes nets whose average in-degree is close to the maximum in-degree, that is, when $m = \Theta(2^f d)$, the sample complexity lower bound in Fact~\ref{fact:sample-lb} becomes $\Omega(m/\eps^2)$, so our sample complexity is optimal up to polylogarithmic factors.

We remark that Theorem~\ref{thm:bal-bn} is most useful when $c$ is a constant and the Bayesian network has bounded fan-in $f$. In this case, the condition on $\alpha$ follows from the $c$-balanced assumption:
When both $c$ and $f$ are constants, $\alpha = \min_{(i, a) \in S} \Pr_P[\Pi_{i,a}] \geq c^f$ is also a constant, so the condition $c^f \geq \Omega(\eps \sqrt{\log(1/\eps)})$ automatically hold when $\eps$ is smaller than some constant.
On the other hand, the problem of learning Bayesian networks is less interesting when the fan-in is too large.
For example, if some node has $f = \omega(\log(d))$ parents, then the size of the conditional probability table is at least $2^f$, which is super-polynomial in the dimension $d$.

\paragraph{Experiments.}
We performed an experimental evaluation of our algorithm on both synthetic and real data.
Our evaluation allowed us to verify the accuracy and the sample complexity rates
of our theoretical results. In all cases, the experiments validate the usefulness of our algorithm, which significantly outperforms previous approaches, 
almost exactly matching the best rate without noise.

\paragraph{Related Work.}
Question~\ref{question} fits in the framework of robust statistics \cite{Huber09, HampelEtalBook86}.
Classical estimators from this field can be classified into two categories: either  (i) they are computationally
efficient but incur an error that scales {\em polynomially} with the dimension $d$, or (ii)
they are provably robust (in the aforementioned sense) but are hard to compute.
In particular, essentially all known estimators in robust statistics (e.g., the Tukey median~\cite{Tukey75})
have been shown~\cite{JP:78, Bernholt, HardtM13} to be intractable in the high-dimensional setting.
\yucomment{Why are the following discussion on sample complexity meaningful?} We note that the robustness requirement does not typically pose
information-theoretic impediments for the learning problem. In most cases of interest
(see, e.g., \cite{CGR15, CGR15b, DiakonikolasKKL16}), the sample complexity
of robust learning is comparable to its (easier) non-robust variant.
The challenge is to design {\em computationally efficient} algorithms.

Efficient robust estimators are known for various
{\em low-dimensional} structured distributions (see, e.g.,~\cite{DDS14-toc, CDSS13, CDSS14, CDSS14b, AcharyaDLS16, ADLS17, DLS18}).
However, the robust learning problem becomes surprisingly challenging in high dimensions.
Recently, there has been algorithmic progress on this front:
\cite{DiakonikolasKKL16, LaiRV16}~give
polynomial-time algorithms with improved error guarantees for certain ``simple'' high-dimensional structured distributions.
The results of~\cite{DiakonikolasKKL16} apply to simple distributions, including Bernoulli product distributions, Gaussians, and mixtures thereof (under some natural restrictions).
Since the works of~\cite{DiakonikolasKKL16, LaiRV16}, computationally efficient robust estimation in high dimensions has received considerable attention 
(see, e.g.,~\cite{DKS17-sq, DiakonikolasKKLMS17, BalakrishnanDLS17, DiakonikolasKKLMS18, DiakonikolasKS18-mixtures, DiakonikolasKS18-nasty, HopkinsL18, KothariSS18, PrasadSBR2018, DiakonikolasKKLSS2018sever, KlivansKM18, DKS18-lr, LSLC18-sparse}).

\subsection{Overview of Algorithmic Techniques}
\label{sec:techniques}
Our algorithmic approach builds on the framework of~\cite{DiakonikolasKKL16} with new technical and conceptual ideas.
At a high level, our algorithm works as follows: 
We draw an $\eps$-corrupted set of samples from a Bayesian network $P$ with known structure, 
  and then iteratively remove samples until we can return the empirical conditional probability table. 

First, we associate a vector $F(X)$ to each sample $X$ so that learning the mean of $F(X)$
  to good accuracy is sufficient to recover the distribution.
In the case of binary products, $F(X)$ is simply $X$, while in our case we need to take into account additional information 
  about conditional means.

From this point, our algorithm will try to do one of two things:
  Either we show that the sample mean of $F(X)$ is close to the conditional mean
  of the true distribution (in which case we can already learn the ground-truth Bayes net $P$), or we are able to produce a {\em filter}, i.e.,   we can remove some of our samples, and it is guaranteed that we throw away more bad samples than good ones. 
If we produce a filter, we then iterate on those samples that pass the filter. 
To produce a filter, 
  we compute a matrix $M$ which is roughly the empirical covariance matrix of $F(X)$.
We show that if the corruptions are sufficient to notably disrupt the sample mean of $F(X)$,
  there must be many erroneous samples that are all far from the mean in roughly the same direction,
  and we can detect this direction by looking at the largest eigenvector of $M$.
If we project all samples onto this direction,
  concentration bounds of $F(X)$ will imply that almost all samples far from the mean are erroneous,
  and thus filtering them out will provide a cleaner set of samples.

\paragraph{Organization.}
Section \ref{sec:prelims} contains some technical results specific to Bayesian networks
  that we need. Section \ref{sec:alg} gives the details of our algorithm and an overview of its analysis. In Section~\ref{sec:experiment}, we present the experimental evaluations.
In Section~\ref{sec:concl}, we conclude and propose directions for future work.
\ifthenelse{\boolean{hasAppendix}}{}{Due to space constraints, we defer the proofs of the technical lemmas to the full version of the paper.}

\section{Technical Preliminaries} \label{sec:prelims}

The structure of this section is as follows:
First, we bound the total variation distance between two Bayes nets in terms of their conditional probability tables.
Second, we define a function $F(x,q)$, which takes a sample $x$ 
and returns an $m$-dimensional vector that contains information about the conditional means. Finally, we derive a concentration bound from Azuma's inequality.
\ifthenelse{\boolean{hasAppendix}}{Proofs from this section have been deferred to Appendix~\ref{app:prelims}.}{}

\begin{lemma} \label{cor:bal-min}
Suppose that: (i) $\min_{k \in [m]} \Pr_P[\Pi_k] \geq \eps$, and (ii) $P$ or $Q$ is $c$-balanced,
and (iii) $\frac{3}{c}\sqrt{\sum_k \Pr_P[\Pi_k] (p_k-q_k)^2} \leq \eps$.
Then we have that $\dtv(P,Q) \leq \eps.$
\end{lemma}

Lemma~\ref{cor:bal-min} says that to learn a balanced fixed-structure Bayesian network, it is sufficient to learn all the relevant conditional means.
However, each sample $x \sim P$ gives us information about $p_{i,a}$ only if $x \in \Pi_{i,a}$.
To resolve this, we map each sample $x$ to an $m$-dimensional vector $F(x,q)$, and ``fill in'' the entries that correspond to conditional means for which the condition failed to happen.
We will set these coordinates to their empirical conditional means $q$:

\begin{definition}
\label{def:Fxq}
Let $F(x,q)$ for $\{0,1\}^d \times \R^m \to \R^m$ be defined as follows:
If $x \in \Pi_{i,a}$, then $F(x,q)_{i,a}=x_i$, otherwise $F(x,q)_{i,a}=q_{i,a}$.
\end{definition}

When $q=p$ (the true conditional means), the expectation of the $(i,a)$-th coordinate of $F(X,p)$,
for $X \sim P$, is the same conditioned on either $\Pi_{i,a}$ or $\neg \Pi_{i,a}$.
Using the conditional independence properties of Bayesian networks,
we will show that the covariance of $F(x,p)$ is diagonal.

\begin{lemma} \label{lem:F-moments}
For $X \sim P$, we have $\E(F(X,p)) = p$.
The covariance matrix of $F(X,p)$ satisfies $\Cov[F(X,p)]= \diag(\Pr_P[\Pi_k]p_k(1-p_k))$.
\end{lemma}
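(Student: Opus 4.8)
The plan is to compute the mean coordinate-by-coordinate using the defining property of $F$, and then to use \lemref{lem:cond-ind} to show that all off-diagonal entries of the covariance matrix vanish. For the mean, fix $(i,a) \in S$. By definition, $F(X,p)_{i,a} = X_i$ when $X \in \Pi_{i,a}$ and $F(X,p)_{i,a} = p_{i,a}$ otherwise, so
$$
\E[F(X,p)_{i,a}] = \Pr_P[\Pi_{i,a}]\,\E[X_i \mid \Pi_{i,a}] + \Pr_P[\neg\Pi_{i,a}]\, p_{i,a} = \Pr_P[\Pi_{i,a}]\, p_{i,a} + (1-\Pr_P[\Pi_{i,a}])\, p_{i,a} = p_{i,a},
$$
using that $\E[X_i \mid \Pi_{i,a}] = \Pr_P[X_i = 1 \mid \Pi_{i,a}] = p_{i,a}$ by definition of the conditional probability table. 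This gives $\E[F(X,p)] = p$.

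For the covariance, first I would record that conditioned on $\Pi_{i,a}$ the coordinate $F(X,p)_{i,a} = X_i$ is Bernoulli with parameter $p_{i,a}$, so its conditional variance is $p_{i,a}(1-p_{i,a})$, while conditioned on $\neg\Pi_{i,a}$ it is the constant $p_{i,a}$, hence has conditional variance $0$. Since the conditional mean equals $p_{i,a}$ under both events, the law of total variance gives
$$
\var[F(X,p)_{i,a}] = \Pr_P[\Pi_{i,a}]\, p_{i,a}(1-p_{i,a}) + \Pr_P[\neg\Pi_{i,a}]\cdot 0 = \Pr_P[\Pi_{i,a}]\, p_{i,a}(1-p_{i,a}),
$$
which is exactly the claimed diagonal entry. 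It remains to show that for $(i,a) \neq (j,a')$ the covariance $\Cov[F(X,p)_{i,a}, F(X,p)_{j,a'}]$ is zero; by symmetry we may assume $j \leq i$. I would condition on whether $\Pi_{i,a}$ occurs: in either case \lemref{lem:cond-ind} tells us $F(X,p)_{i,a}$ is independent of $F(X,p)_{j,a'}$ (note $(j,a')$ satisfies $j\le i$ and $(j,a')\ne(i,a)$, so the lemma applies, and when $j = i$ we must have $a \ne a'$ so $\Pi_{j,a'}$ is disjoint from $\Pi_{i,a}$ and $F(X,p)_{j,a'} = q_{j,a'}$ is constant on $\Pi_{i,a}$). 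Hence on each of the two conditioning events the conditional covariance is $0$ and the conditional mean of $F(X,p)_{i,a}$ is the constant $p_{i,a}$ (independent of the event); the standard decomposition of covariance into (expected conditional covariance) plus (covariance of conditional means) then yields $\Cov[F(X,p)_{i,a}, F(X,p)_{j,a'}] = 0 + 0 = 0$.

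The only real subtlety is making sure \lemref{lem:cond-ind} is being invoked correctly when $j=i$: there the two indices share the first coordinate $i$ but differ in the parental value, so $\Pi_{i,a}$ and $\Pi_{i,a'}$ are disjoint, $\Pi_{i,a} \subseteq \neg\Pi_{i,a'}$, and on the event $\Pi_{i,a}$ the coordinate $F(X,p)_{i,a'}$ is identically $p_{i,a'}$, so it is trivially independent of $F(X,p)_{i,a}$ there; on $\neg\Pi_{i,a}$ we apply the lemma directly. Everything else is the routine conditioning computation sketched above, so I do not expect any genuine obstacle.
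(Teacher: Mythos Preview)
Your proposal is correct and follows the same overall structure as the paper: compute the mean coordinatewise, compute the diagonal variance by conditioning on $\Pi_k$ versus $\neg\Pi_k$, and show the off-diagonal covariances vanish. The only difference is in the off-diagonal step. You invoke \lemref{lem:cond-ind} together with the law of total covariance, whereas the paper argues more directly: for $i=j$ it simply observes that $(F(X,p)_{i,a}-p_{i,a})(F(X,p)_{i,a'}-p_{i,a'})=0$ almost surely since $\Pi_{i,a}$ and $\Pi_{i,a'}$ are disjoint, and for $i>j$ it conditions on the full history $X_1,\ldots,X_{i-1}$ and notes that $\E[F(X,p)_{i,a}\mid X_1,\ldots,X_{i-1}]=p_{i,a}$. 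Both routes are valid; yours is slightly cleaner in that \lemref{lem:cond-ind} handles both cases uniformly, while the paper's version is self-contained and avoids the extra lemma. (Minor slip: in one place you write $q_{j,a'}$ where you mean $p_{j,a'}$.)
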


Our algorithm makes crucial use of Lemma~\ref{lem:F-moments} (in particular, that $\Cov[F(X,p)]$ is diagonal)
  to detect whether or not the empirical conditional probability
  table of the noisy distribution is close to the conditional probabilities.

Finally, we will need a suitable concentration inequality that works under conditional independence properties.
We can use Azuma's inequality to show that the projections of $F(X,q)$ on any direction $v$ is concentrated around the projection of the sample mean $q$.
\begin{lemma} \label{lem:Azuma}
For $X \sim P$, any unit vector $v \in \R^d$, and any $q \in [0, 1]^m$, we have
\[
\Pr[|v \cdot (F(X, q) - q)| \geq T + \|p-q\|_2] \leq 2\exp(-T^2/2) \;.
\]
\end{lemma}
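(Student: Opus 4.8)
The plan is to fix a unit vector $v \in \R^d$ and expose the coordinates of $X$ in the order $1, 2, \dots, d$, building a martingale out of the partial sums of $v \cdot F(X,q)$. The key structural input is Lemma~\ref{lem:cond-ind} (and the computation in Lemma~\ref{lem:inverse}): once we condition on $\Pi_{i,a}$ or $\neg\Pi_{i,a}$, the coordinate $F(X,p)_{i,a}$ is independent of all earlier coordinates $F(X,p)_{j,a'}$ with $j \le i$. Concretely, I would define the filtration $\mathcal{F}_j = \sigma(X_1, \dots, X_j)$, and note via Lemma~\ref{lem:inverse} that for every $(i,a) \in S$ with $i \le j$ the value $F(X,p)_{i,a}$ is $\mathcal{F}_j$-measurable; moreover whether we are in $\Pi_{i,a}$ is determined by $X_{\mathrm{Parents}(i)}$, hence by $\mathcal{F}_{i-1}$. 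So the natural object is $Y_j \eqdef \E[\, v \cdot F(X,p) \mid \mathcal{F}_j\,]$, a Doob martingale with $Y_0 = v \cdot \E F(X,p) = v \cdot p$ (by Lemma~\ref{lem:F-moments}) and $Y_d = v \cdot F(X,p)$.

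Next I would bound the martingale differences $|Y_j - Y_{j-1}|$. Revealing $X_j$ changes exactly those coordinates $F(X,p)_{j,a}$ with $(j,a) \in S$: before revealing $X_j$ these coordinates equal the conditional mean $p_{j,a}$ (either because $\neg \Pi_{j,a}$ has occurred, or because they are being averaged over $X_j$), and after revealing $X_j$, for the unique $a$ with $X_{\mathrm{Parents}(j)} = a$ the coordinate becomes $X_j \in \{0,1\}$ while the others stay at $p_{j,a}$. Hence $Y_j - Y_{j-1} = v_j (X_j - p_{j, a(j)})$ where $a(j)$ is the realized parental configuration of node $j$, giving $|Y_j - Y_{j-1}| \le |v_j|$. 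Azuma's inequality then yields $\Pr[\,|v \cdot F(X,p) - v\cdot p| \ge T\,] \le 2\exp(-T^2/(2\sum_j v_j^2)) = 2\exp(-T^2/2)$, since $v$ is a unit vector. This settles the case $q = p$.

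To pass from $q=p$ to general $q$, observe that $F(x,q) - F(x,p)$ is supported on exactly the coordinates $(i,a)$ with $x \notin \Pi_{i,a}$, where it equals $q_{i,a} - p_{i,a}$; on the coordinates with $x \in \Pi_{i,a}$ both functions equal $x_i$, so the difference is zero. Therefore $|v \cdot (F(x,q) - F(x,p))| \le \|v\|_2 \, \|F(x,q)-F(x,p)\|_2 \le \|p - q\|_2$ by Cauchy--Schwarz, uniformly over $x$. Writing $v \cdot (F(X,q) - q) = v\cdot(F(X,p) - p) + v\cdot(F(X,q)-F(X,p)) - v\cdot(q-p)$, the last two terms together have absolute value at most $2\|p-q\|_2$ --- but a cleaner route is to write $v\cdot(F(X,q)-q) = v\cdot(F(X,p)-p) + v\cdot\big(F(X,q)-F(X,p) - (q-p)\big)$ and bound the second summand directly: on coordinate $(i,a)$ it is $0$ when $x \in \Pi_{i,a}$ and $-(p_{i,a} - q_{i,a})$ wait, rather $q_{i,a} - p_{i,a} - (q_{i,a}-p_{i,a}) = 0$... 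I would instead just use the crude bound $|v\cdot(F(X,q)-F(X,p))| + |v\cdot(q - p)| \le 2\|p-q\|_2$ and absorb a factor of $2$, or (matching the statement exactly) note that $F(X,q) - q$ and $F(X,p)-p$ differ only where $x\notin\Pi_{i,a}$, on which $(F(X,q)-q)_{i,a} = 0 = (F(X,p)-p)_{i,a}$ in the sense that both vanish --- so in fact $v\cdot(F(X,q)-q) - v\cdot(F(X,p)-p)$ is supported on coordinates where $x\in\Pi_{i,a}$, where it equals $p_{i,a}-q_{i,a}$, giving exactly $\|p-q\|_2$ by Cauchy--Schwarz. Then $\{|v\cdot(F(X,q)-q)| \ge T + \|p-q\|_2\} \subseteq \{|v\cdot(F(X,p)-p)| \ge T\}$, and the $q=p$ bound finishes it.

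The main obstacle is the martingale-difference step: one has to be careful that the bounded-difference property genuinely holds for $F(\cdot,p)$ rather than for a product distribution, and this is exactly where Lemmas~\ref{lem:inverse} and~\ref{lem:cond-ind} do the work --- they guarantee that conditioning on $X_1,\dots,X_{j-1}$ already pins down the conditional means used to fill the unobserved coordinates of node $j$, so that revealing $X_j$ perturbs only the single scalar $X_j - p_{j,a(j)}$ in the direction $v_j e_{(j,a(j))}$. Everything else (Azuma, Cauchy--Schwarz) is routine.
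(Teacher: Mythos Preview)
Your approach is essentially the paper's: a martingale argument on $F(X,p)$ via Azuma, followed by the coordinatewise comparison of $F(x,q)-q$ with $F(x,p)-p$ and Cauchy--Schwarz (your final observation---that the difference is $p_{i,a}-q_{i,a}$ on coordinates with $x\in\Pi_{i,a}$ and $0$ otherwise---is exactly what the paper does).

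One point to clean up. The lemma's ``$v\in\R^d$'' is a typo: since $F(X,q)\in\R^m$, the unit vector lives in $\R^m$, indexed by $(i,a)\in S$. Correspondingly your increment should read $Y_j-Y_{j-1}=v_{(j,a(j))}(X_j-p_{j,a(j)})$, not ``$v_j(\cdots)$'', and the bound $|v_{(j,a(j))}|$ is random (though $\mathcal{F}_{j-1}$-measurable), so ``$\sum_j v_j^2=1$ since $v$ is a unit vector'' is not literally right. You can fix this either by taking deterministic bounds $c_j=\max_a|v_{(j,a)}|$, for which $\sum_j c_j^2\le\sum_{(j,a)}v_{(j,a)}^2=1$, or by noting $\sum_j v_{(j,a(j))}^2\le\|v\|_2^2$ almost surely and invoking the predictable-bounds form of Azuma. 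The paper avoids the issue by indexing the martingale over $k\in[m]$ rather than $j\in[d]$: it shows $\E[F(X,p)_k\mid F(X,p)_1,\dots,F(X,p)_{k-1}]=p_k$, so the partial sums $\sum_{k\le\ell}v_k(F(X,p)_k-p_k)$ form a martingale with deterministic increment bounds $|v_k|$ and $\sum_k v_k^2=1$ on the nose.
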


\makeatletter{}
\section{Robust Learning Algorithm} \label{sec:alg}

We first look into the major ingredients required for our filtering algorithm,   and compare our proof with that for product distributions in \cite{DiakonikolasKKL16} on a   more technical level.

In Section~\ref{sec:prelims}, we mapped each sample $X$ to   $F(X,q)$ which contains information about the \emph{conditional} means $q$,
  and we showed that it is sufficient to learn the mean of $F(X,q)$ to learn the ground-truth Bayes net.

Let $M$ denote the empirical covariance matrix of $(F(X,q)-q)$.
We decompose $M$ into three parts:
One coming from the ground-truth distribution, one coming from the subtractive error
(because the adversary can remove $\eps N$ good samples),
and one coming from the additive error (because the adversary can add $\eps N$ bad samples).
We will make use of the following observations:
\begin{enumerate}
\item[(1)] The noise-free distribution has a diagonal covariance matrix.
\item[(2)] The term coming from the subtractive error has no large eigenvalues.
\end{enumerate}
These two observations imply that any large eigenvalues of $M$
  are due to the additive error. Finally, we will reuse our concentration bounds
  to show that if the additive errors are frequently far from the mean in a known direction,
  then they can be reliably distinguished from good samples.

For the case of binary product distributions in~\cite{DiakonikolasKKL16}, (1) is trivial because the coordinates are independent;
  but for Bayesian networks we need to expand the dimension of the samples and fill in the missing entries properly. Condition (2) is due to concentration bounds, and for product distributions it follows from standard Chernoff bounds,
  while for Bayes nets, we must instead rely on martingale arguments and Azuma's inequality. 
\ifthenelse{\boolean{hasAppendix}}{
The main difference between the proof of correctness of our algorithm 
  and those given in \cite{DiakonikolasKKL16} lies in analyzing the mean, 
  covariance and tail bounds of $F(X,q)$, and showing that 
  its mean and covariance are well-behaved when $q$ is close to $p$ (see Lemma~\ref{lem:good-set} in Appendix~\ref{app:alg}).}{}

\subsection{Main Technical Lemma and Proof of Theorem~\ref{thm:bal-bn}}

First, we need to show that a large enough set of samples with no noise satisfy properties we expect 
from a representative set of samples. 
We need that the mean, covariance, and tail bounds of $F(X,p)$ behave like we would expect them to.
\ifthenelse{\boolean{hasAppendix}}{
This happens with high probability. 
The details are given in Lemma~\ref{lem:good-set} in Appendix~\ref{app:good}.
}{}
We call a set of samples that satisfies these properties $\eps$-good for $P$.

Our algorithm takes as input an $\eps$-corrupted multiset $S'$ of $N = \wt \Omega(m \log (1/\tau) / \eps^2)$ samples. We write $S'=(S \setminus L) \cup E$, where $S$ is the set of samples before corruption,
  $L$ contains the good samples that have been removed or (in later iterations) incorrectly rejected by filters,
  and $E$ represents the remaining corrupted samples.
We assume that $S$ is $\eps$-good.
In the beginning, we have $|E|+|L| \leq 2 \eps |S|$.
As we add filters in each iteration, $E$ gets smaller and $L$ gets larger. 
However, we will prove that our filter rejects more samples from $E$ than $S$, so $|E|+|L|$ must get smaller.

We will prove Theorem~\ref{thm:bal-bn} by iteratively running the following efficient filtering procedure: 
\begin{proposition}[Filtering] \label{prop:bal}
Let $0 < \eps < 1/2$. Let $P$ be a $c$-balanced Bayesian network on $\{0, 1\}^d$ with known structure $G$.
Assume each parental configuration of $P$ occurs with probability at least $\alpha \ge \Omega(\eps\sqrt{\log(1/\eps)} / c)$.
Let $S'=S \cup E \setminus L$ be a set of samples such that $S$ is $\eps$-good for $P$ and $|E|+|L| \leq 2 \eps |S'|$.
There is an algorithm that, given $G$, $\eps$, and $S'$, runs in time $\wt O(d |S'|)$, and either
\begin{enumerate}
\setlength{\itemindent}{-.1in}
\item[(i)] Outputs a Bayesian network $Q$ with $\dtv(P,Q) \leq \eps\sqrt{\ln(1/\eps)}/ (c \alpha)$, or
\item[(ii)] Returns an $S''=S \cup E' \setminus L'$ such that $|S''| \le (1-\frac{\eps}{d\ln d})|S'|$ and $|E'|+|L'| < |E|+|L|$.
  \end{enumerate}
\end{proposition}
If this algorithm produces a subset $S''$, then we iterate using $S''$ in place of $S'$.
We will present the algorithm establishing Proposition \ref{prop:bal} in the following section.
We first use it to prove Theorem~\ref{thm:bal-bn}. 
\begin{proof}[{\bf Proof of Theorem~\ref{thm:bal-bn}}]
First a set $S$ of $N = \wt \Omega(m \log(1/\tau)/\eps^2)$ samples are drawn from $P$.
We assume the set $S$ is $\eps$-good for $P$\ifthenelse{\boolean{hasAppendix}}{ (which happens with probability at least $1-\tau$ by Lemma~\ref{lem:good-set})}{}.
Then an $\eps$-fraction of these samples are adversarially corrupted, giving a set $S'=S \cup E \setminus L$ with $|E|,|L| \leq \eps |S'|$.
Thus $S'$ satisfies the conditions of Proposition \ref{prop:bal}, and the algorithm outputs a smaller set $S''$ of samples that also satisfies the conditions of the proposition, or else outputs a Bayesian network $Q$ with small $\dtv(P,Q)$ that satisfies Theorem~\ref{thm:bal-bn}.
Since $|S'|$ decreases if we produce a filter, eventually we must output a Bayesian network.

Next we analyze the running time.
Observe that we can filter out at most $2\eps N$ samples, because we reject more bad samples than good ones.
By Proposition~\ref{prop:bal}, every time we produce a filter, we remove at least $\wt \Omega(d/\eps) |S'| = \wt \Omega(Nd/\eps)$ samples.
Therefore, there are at most $\wt O(d)$ iterations, and each iteration takes time $\wt O(d |S'|) = \wt O(N d)$ by Proposition~\ref{prop:bal}, so the overall running time is $\wt O(Nd^2)$.
\end{proof}

\subsection{Algorithm {\tt Filter-Known-Topology}} \label{sec:algo-description}
In this section, we present Algorithm~\ref{alg:filter} that establishes Proposition~\ref{prop:bal}.~\footnote{
We use $X \in_u S$ to denote that the point $X$ is drawn uniformly from the set of samples $S$.}

\begin{algorithm}[ht]
   \caption{\tt Filter-Known-Topology}
   \label{alg:filter}
\begin{algorithmic}[1]
   \STATE {\bfseries Input:} The dependency graph $G$ of $P$, $\eps>0$, and a (possibly corrupted) set of samples $S'$ from~$P$.    $S'$ satisfies that there exists an $\eps$-good $S$ with $S' = S \cup E \setminus L$ and $|E|+|L| \leq 2\eps|S'|$.
   \STATE {\bfseries Output:} A Bayes net $Q$ or a subset $S'' \subset S'$ that satisfies Proposition~\ref{prop:bal}.
   \STATE Compute the empirical conditional probabilities $q(i,a) = \Pr_{X \in_u S'}\left[X_i=1 \mid \Pi_{i,a}\right]$.
   \STATE Compute the empirical minimum parental configuration probability $\alpha=\min_{(i,a)} \Pr_{S'}[\Pi_{(i,a)}]$.
   \STATE Define $F(X, q)$: If $x \in \Pi_{i,a}$ then $F(x,q)_{i,a}=x_i$, otherwise $F(x,q)_{i,a}=q_{i,a}$ (Definition~\ref{def:Fxq}).
   \STATE Compute the empirical second-moment matrix of $F(X, q) - q$ and zero its diagonal, i.e., $M \in \R^{m \times m}$ with $M_{k,k}=0$, and $M_{k,\ell} = \E_{X \in_u S'}[ (F(X,q)_k - q_k) (F(X,q)_\ell - q_\ell)^T]$ for $k \ne \ell$.
   \STATE Compute the largest (in absolute value) eigenvalue $\lambda^\ast$ of $M$, and the associated eigenvector $v^{\ast}$.
   \IF{$|\lambda^\ast| \leq O(\eps \log(1/\eps)/\alpha)$}
   \STATE Return $Q = $ the Bayes net with graph $G$ and conditional probabilities $q$.
   \ELSE
   \STATE Let $\delta := 3 \sqrt{ \eps |\lambda^\ast|}/\alpha$.  Pick any $T>0$ that satisfies \label{step:findT}
     \[
      \Pr_{X \in_u S'} [|v^* \cdot (F(X,q)-q)| > T+\delta] > 7\exp(-T^2/2)+3\epsilon^2/(T^2 \ln d) \;.
     \]
     Return $S'' = $ the set of samples $x \in S'$ with $|v\cdot (F(x,q)-q) | \leq T+\delta$.
        \ENDIF
\end{algorithmic}
\end{algorithm}
At a high level, Algorithm~\ref{alg:filter} computes a matrix $M$, and shows that: either
  $\|M\|_2$ is small, and we can output the empirical conditional probabilities, or
  $\|M\|_2$ is large, and we can use the top eigenvector of $M$ to remove bad samples.

\paragraph{Setup and Structural Lemmas.} \label{ssec:setup}
In order to understand the second-moment matrix with zeros on the diagonal, $M$,
we will need to break down this matrix in terms of several related matrices, where the expectation is taken over different sets.
For a set $D=S',S,E$ or $L$, we use $w_D = |D|/|S'|$ to denote the fraction of the samples in~$D$.
Moreover, we use $M_D=\E_{X \in_u D}[((F(X,q)-q)(F(X,q)-q)^T]$ to denote the second-moment matrix of samples in $D$, and let $M_{D,0}$ 
be the matrix we get from zeroing out the diagonals of $M_D$.
Under this notation, we have $M_{S'} = w_S M_{S} + w_E M_{E} - w_L M_{L}$ and $M=M_{S',0}$.

Our first step is to analyze the spectrum of $M$, and in particular show that $M$ is close in spectral norm to $w_E M_E$.
To do this, we begin by showing that the spectral norm of $M_{S,0}$ is relatively small.
Since $S$ is good, we have bounds on the second moments $F(X,p)$.
We just need to deal with the error from replacing $p$ with $q$\ifthenelse{\boolean{hasAppendix}}{ (see Appendix~\ref{app:setup} for the proof)}{}:

\begin{lemma} \label{lem:MP-bound}
$\|M_{S,0}\|_2 \leq O(\eps + \sqrt{\sum_k \Pr_S[\Pi_k] (p_k-q_k)^2} + \sum_k \Pr_S[\Pi_k] (p_k-q_k)^2)$.
\end{lemma}

Next, we wish to bound the contribution to $M$
coming from the subtractive error. We show that this is small
due to concentration bounds on $P$ and hence on $S$. 
The idea is that for any unit vector $v$, we have tail bounds for the random variable 
$v \cdot (F(X,q)-q)$ and, since $L$ is a subset of $S$,
$L$ can at worst consist of a small fraction of the tail of this distribution. 

\begin{lemma} \label{lem:ML-bound}
$w_L \|M_L\|_2 \leq O(\eps\log(1/\eps)+\eps\|p-q\|_2^2)$.
\end{lemma}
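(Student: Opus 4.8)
The plan is to bound $w_L\|M_L\|_2 = w_L \sup_{\|v\|_2=1} v^T M_L v$ by fixing an arbitrary unit vector $v\in\R^m$ and controlling $w_L\, \E_{X\sim L}\big[(v\cdot(F(X,q)-q))^2\big]$. Since $M_L$ is a (positive semidefinite) covariance-type matrix, its spectral norm equals the maximum over unit $v$ of this quadratic form, so it suffices to show the bound for every fixed $v$. First I would use the domination relation noted in the overview, namely $w_L\Pr_L[B]\le w_P\Pr_P[B]\le (1+2\eps)\Pr_P[B]$ for every event $B$, together with the fact that $w_LL(x)\le w_PP(x)\le P(x)$ pointwise (from Fact~\ref{fact:simple}(ii) applied at the appropriate scale; strictly one writes $w_L L(x) \le w_P P(x)$). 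This lets me relate integrals against $w_L L$ to integrals against $P$ restricted to any region: for a nonnegative function $g$, $w_L\,\E_{X\sim L}[g(X)] = \sum_x w_L L(x) g(x) \le \sum_x P(x) g(x)$, but that alone loses the factor $\eps$. The point is that $w_L L$ has total mass $w_L\le 2\eps$, so it is a \emph{small} piece of mass, and the worst case is when all of it sits on the extreme tail of the distribution of $v\cdot(F(P,q)-q)$.

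The key quantitative input is the concentration bound Lemma~\ref{lem:Azuma}: for $X\sim P$ and any unit vector $v$, $\Pr[|v\cdot(F(X,q)-q)|\ge T+\|p-q\|_2]\le 2\exp(-T^2/2)$. Write $R = v\cdot(F(X,q)-q)$. I want to bound $w_L\,\E_{X\sim L}[R^2]$. Decompose $R^2 = (R^2 - \|p-q\|_2^2)_+ + \min(R^2,\|p-q\|_2^2) \le (|R|-\|p-q\|_2)_+^2 + 2\|p-q\|_2^2$ (up to constants; one can also just split at a threshold). The second piece contributes $w_L \cdot 2\|p-q\|_2^2 \le 4\eps\|p-q\|_2^2$, giving the $\eps\|p-q\|_2^2$ term. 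For the first piece, set $Y=(|R|-\|p-q\|_2)_+$; then $\Pr_P[Y\ge T]\le 2\exp(-T^2/2)$ by Lemma~\ref{lem:Azuma}. Now $w_L\,\E_{X\sim L}[Y^2] = w_L\int_0^\infty 2T\,\Pr_L[Y\ge T]\,dT$, and $w_L\Pr_L[Y\ge T]\le \min(w_L, w_P\Pr_P[Y\ge T]) \le \min(2\eps,\ (1+2\eps)\cdot 2\exp(-T^2/2))$. Splitting the integral at the threshold $T_0$ where $2\exp(-T_0^2/2)\asymp \eps$, i.e.\ $T_0\asymp\sqrt{\log(1/\eps)}$: for $T\le T_0$ the integrand is at most $2T\cdot 2\eps$, contributing $O(\eps T_0^2)=O(\eps\log(1/\eps))$; for $T> T_0$ it is at most $2T\cdot O(\exp(-T^2/2))$, whose integral is $O(\exp(-T_0^2/2)) = O(\eps)$. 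Combining, $w_L\,\E_{X\sim L}[Y^2] = O(\eps\log(1/\eps))$, which is exactly the $\eps\log(1/\eps)$ term. Taking the supremum over unit $v$ yields $w_L\|M_L\|_2 \le O(\eps\log(1/\eps) + \eps\|p-q\|_2^2)$.

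The main obstacle I anticipate is bookkeeping the domination relation carefully: one must remember that Lemma~\ref{lem:Azuma} is stated for $X\sim P$ with $\|p-q\|_2$ appearing as an additive shift, and the quantity controlled is $v\cdot(F(X,q)-q)$ rather than $v\cdot(F(X,p)-p)$, so the shift by $\|p-q\|_2$ is unavoidable and is what forces the $\eps\|p-q\|_2^2$ term into the statement. The other subtlety is that $M_L$ is defined with the \emph{full} (non-zeroed) diagonal, so no separate diagonal-removal argument is needed here — the quadratic form $v^T M_L v$ is exactly $\E_{X\sim L}[(v\cdot(F(X,q)-q))^2]\ge 0$, which is why taking $\sup_v$ over unit vectors gives the spectral norm directly. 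Everything else is the standard tail-integration estimate $\int_0^\infty T e^{-T^2/2}\,dT = 1$ together with truncation at $T_0\asymp\sqrt{\log(1/\eps)}$.
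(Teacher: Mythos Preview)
Your proposal is correct and essentially identical to the paper's proof: both use the domination $w_L\Pr_L[\cdot]\le O(1)\Pr_P[\cdot]$ together with Lemma~\ref{lem:Azuma}, then express $w_L\, v^T M_L v$ as a tail integral $\int 2T\,\Pr[\,|v\cdot(F(X,q)-q)|\ge T\,]\,dT$ and split at a threshold of order $\sqrt{\log(1/\eps)}$ (the paper folds the $\|p-q\|_2$ shift into the threshold rather than subtracting it off first via $R^2\le 2(|R|-\|p-q\|_2)_+^2+2\|p-q\|_2^2$, but that is purely cosmetic). Your remark that $M_L$ keeps its diagonal and is PSD, so the spectral norm equals the maximum Rayleigh quotient, is exactly what the paper uses implicitly.
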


Finally, combining the above results,
since $M_{S}$ and $M_L$ have small contribution to the spectral norm of $M$ when $\|p-q\|_2$ is small,
most of it must come from $M_E$.

\begin{lemma} \label{lem:MApprox}
$\|M - w_E M_E\|_2 \leq O\left(\eps \log(1/\eps) + \sqrt{\sum_k \Pr_{S'}[\Pi_k] (p_k-q_k)^2} + \sum_k \Pr_{S'}[\Pi_k] (p_k-q_k)^2\right)$.
\end{lemma}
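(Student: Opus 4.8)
The plan is to decompose $M - w_E M_E$ using the triangle inequality for spectral norm and then bound each piece using the lemmas already established. First I would write
$$
\|M - w_E M_E\|_2 \leq \|M - M_{\wt P}\|_2 + \|M_{\wt P} - w_E M_E\|_2,
$$
where the first term is at most $\eps$ by Lemma~\ref{lem:empirical-moments-good}. For the second term, the key observation is that the pmf identity $\wt P(x) = w_P P(x) + w_E E(x) - w_L L(x)$ and the fact that the $(i,j)$-entry of each of $M_{\wt P}, M_P, M_E, M_L$ is a linear functional (an expectation of the fixed function $(F(x,q)_i - q_i)(F(x,q)_j - q_j)$) immediately give the matrix identity $M_{\wt P} = w_P M_P + w_E M_E - w_L M_L$ on the off-diagonal entries. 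Since $M_{\wt P}, M_P$ both have zero diagonal by definition while $M_E, M_L$ do not, I need to be slightly careful: the correct statement is that off the diagonal $M_{\wt P} = w_P M_P + w_E M_E - w_L M_L$, and on the diagonal $M_{\wt P}$ is zero while $w_P M_P$ is zero, so $0 = w_E (M_E)_{ii} - w_L (M_L)_{ii}$ there. The cleanest route is to just note that $M_{\wt P} - w_E M_E = w_P M_P - w_L M_L - w_E D$, where $D$ is the diagonal matrix agreeing with $M_E$ on the diagonal, and $w_E D = w_L D'$ with $D'$ the diagonal of $M_L$; alternatively absorb the diagonal discrepancy by observing it equals $-(w_E M_E - w_L M_L)$ restricted to the diagonal, which is controlled by $\|w_E M_E\|_2$ and $\|w_L M_L\|_2$ anyway.

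Concretely, I would argue: $\|M_{\wt P} - w_E M_E\|_2 \leq \|w_P M_P\|_2 + w_L\|M_L\|_2 + (\text{diagonal correction})$. The term $\|w_P M_P\|_2 \leq 2\|M_P\|_2 \leq 2\sum_k \Pr_P[\Pi_k](p_k - q_k)^2$ by $|w_P - 1| \leq 2\eps$ and Lemma~\ref{lem:MP-bound}. The term $w_L\|M_L\|_2 \leq O(\eps\log(1/\eps) + \eps\|p-q\|_2^2)$ by Lemma~\ref{lem:ML-bound}, and since $\|p-q\|_2^2 \leq (1/c^2)\sum_k \Pr_P[\Pi_k](p_k-q_k)^2$-type control is not directly needed here — in fact $\eps\|p-q\|_2^2$ is absorbed into $\sum_k \Pr_P[\Pi_k](p_k-q_k)^2$ up to the $\min_k \Pr_P[\Pi_k]$ factor, but since that minimum is $\Omega(\eps^{1/2}\log^{1/4}(1/\eps)) = \omega(\eps)$ we get $\eps\|p-q\|_2^2 \leq \sum_k \Pr_P[\Pi_k](p_k-q_k)^2$ for $\eps$ small. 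The diagonal correction is at most $w_E\|M_E\|_{\mathrm{diag}}$, and since the diagonal of $M_E$ equals (via $w_E(M_E)_{kk} = w_L(M_L)_{kk}$, which holds because $M_{\wt P}$ and $M_P$ have zero diagonal) $w_L$ times the diagonal of $M_L$, it is bounded by $w_L\|M_L\|_2 = O(\eps\log(1/\eps)+\eps\|p-q\|_2^2)$ again. Summing everything gives $\|M - w_E M_E\|_2 \leq O(\eps\log(1/\eps) + \sum_k \Pr_P[\Pi_k](p_k-q_k)^2)$, as desired, after noting $\eps$ itself is dominated by $\eps\log(1/\eps)$.

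The main obstacle I anticipate is bookkeeping around the zeroed diagonals: $M$, $M_{\wt P}$, $M_P$ all have zero diagonal by construction, but $M_E$ and $M_L$ do not, so the naive linear combination $M_{\wt P} = w_P M_P + w_E M_E - w_L M_L$ fails on the diagonal and one must verify that the discrepancy there is exactly $w_E \operatorname{diag}(M_E) = w_L \operatorname{diag}(M_L)$ and that this is spectrally small. Apart from that, I expect the argument to be entirely routine: a triangle inequality followed by three invocations of already-proven lemmas, plus the trivial observation that $\eps = o(\eps\log(1/\eps))$ and that the minimum parental-configuration probability exceeding $\eps$ lets us fold the stray $\eps\|p-q\|_2^2$ term into $\sum_k \Pr_P[\Pi_k](p_k-q_k)^2$.
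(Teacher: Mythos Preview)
Your overall approach is exactly the paper's: triangle-inequality split $\|M-w_EM_E\|_2 \le \|M-M_{\wt P}\|_2 + \|M_{\wt P}-w_EM_E\|_2$, then invoke Lemmas~\ref{lem:empirical-moments-good}, \ref{lem:MP-bound}, \ref{lem:ML-bound}, and finally absorb $\eps\|p-q\|_2^2$ into $\sum_k \Pr_P[\Pi_k](p_k-q_k)^2$ using $\min_k\Pr_P[\Pi_k]\ge \eps$. You are in fact more careful than the paper, which simply asserts $M_{\wt P}=w_PM_P+w_EM_E-w_LM_L$ without mentioning diagonals.

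However, your diagonal bookkeeping contains a genuine error. You claim that because $M_{\wt P}$ and $M_P$ have zero diagonal, it follows that $w_E(M_E)_{kk}=w_L(M_L)_{kk}$. This is false: the diagonals of $M_{\wt P}$ and $M_P$ are zero \emph{by fiat} (they were zeroed out in the definition), not because the underlying expectations $\E_{X\sim \wt P}[(F(X,q)_k-q_k)^2]$ and $\E_{X\sim P}[(F(X,q)_k-q_k)^2]$ vanish. Linearity of expectation gives an identity among those true second moments, but that identity does not reduce to $w_E(M_E)_{kk}=w_L(M_L)_{kk}$. Your alternative suggestion, bounding the diagonal discrepancy by $\|w_EM_E\|_2$, is also unusable here since $\|M_E\|_2$ is precisely the quantity you are trying to isolate and can be large.

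The fix is much simpler than either route you propose. The correct decomposition is
\[
M_{\wt P}-w_EM_E \;=\; w_PM_P - w_LM_L + w_L\,\mathrm{diag}(M_L) - w_E\,\mathrm{diag}(M_E),
\]
and each diagonal entry satisfies $0\le (M_E)_{kk},(M_L)_{kk}\le 1$ because $|F(x,q)_k-q_k|\le 1$. Hence $\|w_E\,\mathrm{diag}(M_E)\|_2\le w_E\le 2\eps$ and $\|w_L\,\mathrm{diag}(M_L)\|_2\le w_L\le 2\eps$, so the diagonal correction contributes only $O(\eps)$, which is absorbed into the $O(\eps\log(1/\eps))$ term. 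With this correction your argument goes through.
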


Lemma~\ref{lem:MApprox} follows using the identity $|S'|M=|S| M_{S,0} + |E| M_{E,0} - |L| M_{L,0}$
and bounding the errors due to the diagonals of $M_E$ and $M_L$.

\paragraph{The Case of Small Spectral Norm.} \label{ssec:small-norm}
In this section, we will prove that if $\|M\|_2 = O(\eps \log(1/\eps)/\alpha)$, then we can output the empirical conditional means $q$.
Recall that $M_{S'} = \E_{X \in_u S'}[ (F(X,q)_i-q_i) (F(X,q)_j-q_j)^T]$ and $M = M_{S', 0}$.

We first show that the contributions that $L$ and $E$ make to $\E_{X \in_u S'}{[F(X,q)-q)}]$ can be bounded in terms of the spectral norms of $M_L$ and $M_E$.
It follows from the Cauchy-Schwarz inequality that:

\begin{lemma} \label{lem:wrong-mean-covar}
$\| \E_{X \in_u L}[F(X,q) - q] \|_2 \leq \sqrt{\|M_L\|_2}$ and $\|\E_{X \in_u E}[F(X,q) - q] \|_2 \leq \sqrt{\|M_E\|_2}$.
\end{lemma}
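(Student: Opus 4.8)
The goal is to bound $\lvert \E_{X\sim L}[F(X,q)] - q\rvert$ and $\lvert \E_{X\sim E}[F(X,q)] - q\rvert$ by $\sqrt{\lVert M_L\rVert_2}$ and $\sqrt{\lVert M_E\rVert_2}$ respectively. Since $M_L$ and $M_E$ are defined with the full (un-zeroed) diagonal, each equals the second-moment matrix $\E_{X\sim D}[(F(X,q)-q)(F(X,q)-q)^T]$ for $D\in\{L,E\}$. The plan is to let $\mu_D := \E_{X\sim D}[F(X,q)]$ and set $v := (\mu_D - q)/\lVert \mu_D-q\rVert_2$ (a unit vector; if $\mu_D = q$ there is nothing to prove). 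Then I would compute
\begin{align*}
v^T M_D v &= \E_{X\sim D}\big[(v\cdot(F(X,q)-q))^2\big] \geq \big(\E_{X\sim D}[v\cdot(F(X,q)-q)]\big)^2 = (v\cdot(\mu_D-q))^2 = \lVert \mu_D-q\rVert_2^2,
\end{align*}
where the inequality is Jensen (equivalently, $\E[Y^2]\geq(\E Y)^2$). Since $v^T M_D v \leq \lVert M_D\rVert_2$ by definition of the spectral norm (note $M_D$ is symmetric PSD, so its spectral norm is the max of $v^T M_D v$ over unit $v$), we get $\lVert \mu_D - q\rVert_2^2 \leq \lVert M_D\rVert_2$, i.e. $\lvert \mu_D - q\rvert \leq \sqrt{\lVert M_D\rVert_2}$, interpreting $\lvert\cdot\rvert$ as the Euclidean norm on $\R^m$.

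Applying this with $D = L$ gives the first inequality and with $D = E$ the second; the two cases are identical. The only points worth a sentence of care are: (i) that $M_L$ and $M_E$ are genuine second-moment matrices about the point $q$ (which is immediate from their definitions, since unlike $M$, $M_{\wt P}$, $M_P$ their diagonals are not zeroed), so they are symmetric positive semidefinite and the spectral norm coincides with the maximal Rayleigh quotient; and (ii) the trivial edge case $\mu_D = q$.

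I do not expect any real obstacle here — this is a one-line Cauchy–Schwarz/Jensen argument dressed up in matrix notation; the lemma is essentially the statement that for a PSD second-moment matrix the shift of the mean is controlled by the operator norm. The only mild subtlety is making sure the diagonal convention in the definitions of $M_L, M_E$ is the non-zeroed one, so that $v^T M_D v$ really is $\E[(v\cdot(F(X,q)-q))^2]$ with no correction terms; I would note this explicitly at the start of the proof.
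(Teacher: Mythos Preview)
Your proposal is correct and essentially the same argument as the paper's. The paper writes the second-moment matrix as $\Cov[F(X,q)] + (\mu_D - q)(\mu_D - q)^T$ and uses PSD-ness of the covariance to conclude $\|M_D\|_2 \geq \|\mu_D - q\|_2^2$; you reach the identical inequality by testing the Rayleigh quotient at $v=(\mu_D-q)/\|\mu_D-q\|_2$ and invoking $\E[Y^2]\geq(\E Y)^2$, and your remark about the non-zeroed diagonals of $M_L,M_E$ is exactly the point that makes this go through.
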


Combining with the results about these norms in Section~\ref{ssec:setup}, Lemma~\ref{lem:wrong-mean-covar} implies that if $\|M\|_2$ is small, then $q = \E_{X \in_u S'}[F(X,q)]$ is close to $\E_{X \in_u S}[F(X,q)]$, which is then necessarily close to $\E_{X \sim P}[F(X,p)] = p$.
The following lemma states that the mean of $(F(X,q)-q)$ under the good samples is close to $(p-q)$ scaled by the probabilities of parental configurations under $S'$:

\begin{lemma} \label{lem:mean-transformed}
Let $z \in \R^m$ be the vector with $z_k=\Pr_{S'}[\Pi_k] (p_k-q_k)$.
Then  $\| \E_{X \in_u S}[F(X,q)-q] - z\|_2 \leq O(\eps (1+\|p-q\|_2))$.
\end{lemma}

Note that $z$ is closely related to the total variation distance between $P$ and $Q$ (the Bayes net with conditional probabilities $q$).
We can write $(\E_{X \in_u S'}[F(X,q)]-q)$ in terms of this expectation under $S, E$, and $L$ whose distance from $q$ can be upper bounded using the previous lemmas.
Using Lemmas~\ref{lem:ML-bound},~\ref{lem:MApprox},~\ref{lem:wrong-mean-covar},~and~\ref{lem:mean-transformed}, we can bound $\|z\|_2$ in terms of $\|M\|_2$:

\begin{lemma} \label{lem:mean-dist-from-norm}
$\sqrt{\sum_k \Pr_{S'}[\Pi_k]^2 (p_k-q_k)^2} \leq 2\sqrt{\eps \|M\|_2} + O(\eps \sqrt{\log(1/\eps)+1/\alpha})$.
\end{lemma}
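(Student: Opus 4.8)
The plan is to compare the quantity $\sum_k \Pr_P[\Pi_k]^2 (p_k-q_k)^2$ with $\|M\|_2$ by interpreting the former in terms of the mean of $F(\wt P, q)$ and then bounding how far this mean can drift from $\E_{X\sim P}[F(X,q)]$. First I would use Lemma~\ref{lem:mean-transformed}, which tells us that $(\E_{X\sim P}[F(X,q)]-q)_k = \Pr_P[\Pi_k](p_k-q_k)$, so that
\[
\sqrt{\sum_k \Pr_P[\Pi_k]^2 (p_k-q_k)^2} = \|\E_{X\sim P}[F(X,q)]-q\|_2.
\]
Thus the left-hand side of the lemma is exactly $\|\E_{X\sim P}[F(X,q)]-q\|_2$, and I need to show this is at most $2\sqrt{\eps\|M\|_2}+O(\eps\sqrt{\log(1/\eps)})$.

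Next I would pass through the empirical/true mean of $\wt P$. By Lemma~\ref{lem:empirical-moments-good} we have $\|q - \E_{X\sim\wt P}[F(X,q)]\|_2 \leq \eps$, so by the triangle inequality it suffices to bound $\|\E_{X\sim P}[F(X,q)] - \E_{X\sim\wt P}[F(X,q)]\|_2$. Now decompose $\wt P = w_P P + w_E E - w_L L$; then
\[
\E_{X\sim\wt P}[F(X,q)] - q = w_P(\E_{X\sim P}[F(X,q)]-q) + w_E(\E_{X\sim E}[F(X,q)]-q) - w_L(\E_{X\sim L}[F(X,q)]-q).
\]
Since $|w_P-1|\le 2\eps$, the first term contributes $w_P(\E_{X\sim P}[F(X,q)]-q)$, which differs from the target $\E_{X\sim P}[F(X,q)]-q$ by at most $2\eps\|\E_{X\sim P}[F(X,q)]-q\|_2$; this last quantity is itself $O(\eps\sqrt{\log(1/\eps)})$-small when it is small and otherwise can be absorbed, but more cleanly I would just carry it along and note it is dominated by the left-hand side times $2\eps$, giving a self-improving bound. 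The additive and subtractive terms are controlled by Lemma~\ref{lem:wrong-mean-covar}: $w_E\|\E_{X\sim E}[F(X,q)]-q\|_2 \le w_E\sqrt{\|M_E\|_2} \le \sqrt{w_E}\sqrt{w_E\|M_E\|_2}$, and similarly $w_L\|\E_{X\sim L}[F(X,q)]-q\|_2 \le \sqrt{w_L}\sqrt{w_L\|M_L\|_2}$. Using $w_E,w_L \le 2\eps$, Lemma~\ref{lem:ML-bound} for the $w_L\|M_L\|_2$ term, and Lemma~\ref{lem:MApprox} together with Lemma~\ref{lem:MP-bound} to bound $w_E\|M_E\|_2 \le \|M\|_2 + O(\eps\log(1/\eps)+\sum_k\Pr_P[\Pi_k](p_k-q_k)^2)$, I obtain that these terms are at most $O(\sqrt{\eps}\cdot(\sqrt{\|M\|_2}+\sqrt{\eps\log(1/\eps)}+\sqrt{\sum_k\Pr_P[\Pi_k](p_k-q_k)^2}))$.

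The main obstacle is bookkeeping the cross term $\sqrt{\eps \sum_k \Pr_P[\Pi_k](p_k-q_k)^2}$ that appears on the right, since the left-hand side involves $\sum_k \Pr_P[\Pi_k]^2(p_k-q_k)^2$ with a \emph{square} on the probabilities. The key observation that closes the loop is that, because $\min_k \Pr_P[\Pi_k] \ge \Omega(\eps^{1/2}\log^{1/4}(1/\eps)) \ge \sqrt\eps$ (roughly), we have $\eps \sum_k \Pr_P[\Pi_k](p_k-q_k)^2 \le \sum_k \Pr_P[\Pi_k]^2 (p_k-q_k)^2$, i.e.\ this cross term is bounded by the square of the left-hand side. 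Writing $Y$ for the left-hand side and $Z$ for $\sqrt{\|M\|_2}$, the chain of inequalities above yields something of the form $Y \le O(\sqrt{\eps}\, Z) + O(\eps\sqrt{\log(1/\eps)}) + O(\sqrt{\eps}\,Y)$; since $\sqrt{\eps}$ is small, the $O(\sqrt\eps\,Y)$ term can be absorbed into the left-hand side, and rescaling constants gives $Y \le 2\sqrt{\eps\|M\|_2} + O(\eps\sqrt{\log(1/\eps)})$, as claimed. I would double-check that the constant in front of $\sqrt{\eps\|M\|_2}$ really can be taken to be $2$ after the absorption step, adjusting the hypothesis on $\min_k\Pr_P[\Pi_k]$ (already assumed) to make $\sqrt\eps$ small enough.
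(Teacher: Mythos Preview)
Your proposal is correct and follows essentially the same route as the paper: identify the left-hand side as $\|\E_{X\sim P}[F(X,q)]-q\|_2$ via Lemma~\ref{lem:mean-transformed}, pass to $\E_{X\sim\wt P}[F(X,q)]$ using Lemma~\ref{lem:empirical-moments-good} and the decomposition $\wt P=w_PP+w_EE-w_LL$, control the $E$ and $L$ contributions through Lemmas~\ref{lem:wrong-mean-covar},~\ref{lem:ML-bound}, and~\ref{lem:MApprox}, and then absorb the self-referential term $O(\sqrt{\eps}\,Y)$ into the left-hand side. The paper packages this with the shorthand $D=\diag(\sqrt{\Pr_P[\Pi_k]})$ and carries the cross term as $O(\sqrt{\eps}\,\|D(p-q)\|_2)$ before using $\min_k\Pr_P[\Pi_k]\geq\eps$ to bound it by a small multiple of $\|D^2(p-q)\|_2$, which is exactly your ``key observation''; after the absorption the constant indeed comes out to $2$.
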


Lemma~\ref{lem:mean-dist-from-norm} implies that, if $\|M\|_2$ is small then so is $\sqrt{\sum_k \Pr_{S'}[\Pi_k]^2 (p_k-q_k)^2}$.
We can then use it to show that $\sqrt{\sum_k \Pr_P[\Pi_k] (p_k-q_k)^2}$ is small.
We can do so by losing a factor of $1/\sqrt{\alpha}$ to remove the square on $\Pr_{S'}[\Pi_k]$, and showing that $\min_k \Pr_S'[\Pi_k] = \Theta(\min_k \Pr_P[\Pi_k])$ when it is at least a large multiple of $\eps$.
Finally, if $\sqrt{\sum_k \Pr_P[\Pi_k] (p_k-q_k)^2}$ is small, Lemma~\ref{cor:bal-min} tells us that $\dtv(P,Q)$ is small.
This completes the proof of the first case of Proposition~\ref{prop:bal}.

\begin{corollary}[Part {\em (i)} of Proposition~\ref{prop:bal}] \label{cor:correct}
If $\|M\|_2 \leq O(\eps \log(1/\eps)/\alpha)$,
then 
\[\dtv(P,Q)=O(\eps\sqrt{\log(1/\eps)}/(c \min_k \Pr_P[\Pi_k])) \;.
\]
\end{corollary}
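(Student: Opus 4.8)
The plan is to combine Lemma~\ref{lem:mean-dist-from-norm} with Corollary~\ref{cor:bal-min}, using the hypotheses on $\|M\|_2$ and on $\min_k \Pr_P[\Pi_k]$. First I would set $\mu \eqdef \min_k \Pr_P[\Pi_k]$ for brevity. Lemma~\ref{lem:empirical-moments-good} gives $\alpha \leq \min_k \Pr_{\wt P}[\Pi_k] + \eps$, and since $\dtv(P,\wt P) \leq \eps$ we also have $\min_k \Pr_{\wt P}[\Pi_k] \leq \mu + \eps$, so $\alpha \leq \mu + 2\eps \leq 2\mu$ (using that the minimum probability assumption forces $\mu \geq \Omega(\eps^{1/2}\log^{1/4}(1/\eps)) \gg \eps$); thus the hypothesis $\|M\|_2 \leq O(\eps\log(1/\eps)/\alpha)$ implies $\|M\|_2 \leq O(\eps\log(1/\eps)/\mu)$.

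Next I would plug this into Lemma~\ref{lem:mean-dist-from-norm}. Since $\mu \leq \Pr_P[\Pi_k]$ for every $k$, we have $\mu\sqrt{\sum_k \Pr_P[\Pi_k](p_k-q_k)^2} \leq \sqrt{\sum_k \Pr_P[\Pi_k]^2(p_k-q_k)^2}$, so the lemma yields
$$
\sqrt{\sum_k \Pr_P[\Pi_k](p_k-q_k)^2} \;\leq\; \frac{1}{\mu}\left(2\sqrt{\eps\|M\|_2} + O(\eps\sqrt{\log(1/\eps)})\right).
$$
Substituting the bound $\|M\|_2 \leq O(\eps\log(1/\eps)/\mu)$ gives $\sqrt{\eps\|M\|_2} \leq O(\eps\sqrt{\log(1/\eps)}/\sqrt{\mu})$, hence the right-hand side is $O\big(\eps\sqrt{\log(1/\eps)}/\mu^{3/2}\big) + O\big(\eps\sqrt{\log(1/\eps)}/\mu\big)$. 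Since $\mu \leq 1$, the dominant term is $O(\eps\sqrt{\log(1/\eps)}/\mu^{3/2})$, so
$$
\sqrt{\sum_k \Pr_P[\Pi_k](p_k-q_k)^2} \;\leq\; O\!\left(\frac{\eps\sqrt{\log(1/\eps)}}{\mu^{3/2}}\right).
$$

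Finally I would apply Corollary~\ref{cor:bal-min} to $P$ and $Q$ (which has conditional probability table $q$ and the same graph $G$), with the role of ``$\eps$'' in that corollary played by $\eps' \eqdef C\eps\sqrt{\log(1/\eps)}/\mu$ for a suitable constant $C$. Condition (i) of the corollary, $\min_k\Pr_P[\Pi_k] \geq 2\eps'$, holds because $\mu \geq \Omega(\eps^{1/2}\log^{1/4}(1/\eps))$ forces $\eps' = O(\eps\sqrt{\log(1/\eps)}/\mu) \leq \mu/2$ once $\eps$ is small enough (here one checks $\eps\sqrt{\log(1/\eps)}/\mu^2 \to 0$, which is exactly where the minimum-probability hypothesis is used). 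Condition (ii) holds since $P$ is $c$-balanced by assumption in Theorem~\ref{thm:bal-bn} — though I should note the corollary as stated here is being invoked inside Proposition~\ref{prop:bal}, where $c$-balancedness is not assumed; the cleanest fix is to phrase the corollary's conclusion as a bound on $\dtv(P,Q)$ that plugs into the balanced setting of Theorem~\ref{thm:bal-bn}, or to carry the $c$ factor explicitly. Condition (iii), $\frac{3}{c}\sqrt{\sum_k\Pr_P[\Pi_k](p_k-q_k)^2} \leq \eps'$, follows directly from the displayed bound above by choosing $C$ large relative to $3/c$. Corollary~\ref{cor:bal-min} then gives $\dtv(P,Q) \leq \eps' = O(\eps\sqrt{\log(1/\eps)}/\mu)$, which is the claimed bound.

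The main obstacle I anticipate is bookkeeping the constants and the balancedness hypothesis consistently: Lemma~\ref{lem:mean-dist-from-norm} produces $\sqrt{\sum_k \Pr_P[\Pi_k]^2(p_k-q_k)^2}$, which carries an extra factor of $\mu$ compared to the $\sqrt{\sum_k \Pr_P[\Pi_k](p_k-q_k)^2}$ needed for Corollary~\ref{cor:bal-min}, so one must track whether the final error is $\eps\sqrt{\log(1/\eps)}/\mu$ or $/\mu^{3/2}$; reconciling this with the $1/\mu$ stated in the corollary presumably uses that $\sqrt{\eps\|M\|_2}$ contributes the $\mu^{-3/2}$ term but is itself smaller than the $O(\eps\sqrt{\log(1/\eps)}/\mu)$ additive term precisely in the regime $\|M\|_2 \leq O(\eps\log(1/\eps)/\mu)$ where $\mu$ is not too tiny. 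Getting this dependence exactly right — and confirming it is consistent with the target bound in Theorem~\ref{thm:bal-bn} — is the only delicate point; everything else is substitution.
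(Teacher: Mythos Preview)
Your approach is the same as the paper's, but there is one arithmetic slip that is exactly the source of the $\mu^{-3/2}$ versus $\mu^{-1}$ confusion you flag at the end. When you pass from the $\Pr_P[\Pi_k]^2$-weighted sum to the $\Pr_P[\Pi_k]$-weighted sum, the inequality $\mu \leq \Pr_P[\Pi_k]$ gives, term by term, $\mu\,\Pr_P[\Pi_k](p_k-q_k)^2 \leq \Pr_P[\Pi_k]^2(p_k-q_k)^2$, and hence
\[
\sqrt{\mu}\,\sqrt{\sum_k \Pr_P[\Pi_k](p_k-q_k)^2}\;\leq\;\sqrt{\sum_k \Pr_P[\Pi_k]^2(p_k-q_k)^2},
\]
not the version with $\mu$ in front that you wrote. (Your version is also true, since $\mu^2\leq\mu\leq\Pr_P[\Pi_k]$, but it throws away a factor of $\sqrt{\mu}$.) With the correct factor, dividing by $\sqrt{\mu}$ and substituting $\|M\|_2 \leq O(\eps\log(1/\eps)/\mu)$ yields
\[
\sqrt{\sum_k \Pr_P[\Pi_k](p_k-q_k)^2}\;\leq\;\frac{2\sqrt{\eps\|M\|_2}+O(\eps\sqrt{\log(1/\eps)})}{\sqrt{\mu}}\;=\;O\!\left(\frac{\eps\sqrt{\log(1/\eps)}}{\mu}\right),
\]
because both terms in the numerator are $O(\eps\sqrt{\log(1/\eps)}/\sqrt{\mu})$. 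This matches the target $\eps'=O(\eps\sqrt{\log(1/\eps)}/\mu)$ directly, so condition (iii) of Corollary~\ref{cor:bal-min} goes through and the ``reconciliation'' you anticipated is unnecessary. This is precisely how the paper argues.

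Your observation about $c$-balancedness is well taken: Proposition~\ref{prop:bal} as stated does not assume it, while Corollary~\ref{cor:bal-min} does; the paper simply invokes Corollary~\ref{cor:bal-min} here without comment, implicitly carrying the balancedness hypothesis down from Theorem~\ref{thm:bal-bn}.
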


\paragraph{The Case of Large Spectral Norm.} \label{ssec:big-norm}

Now we consider the case when $\|M\|_2 \geq C\eps \ln(1/\eps)/\alpha$.
We begin by showing that $p$ and $q$ are not too far apart from each other.
The bound given by Lemma \ref{lem:mean-dist-from-norm} is 
now dominated by the $\|M\|_2$ term. 
Lower bounding the $\Pr_{S'}[\Pi_k]$ by $\alpha$ gives the following claim.

\begin{claim} \label{clm:delta-distance}
$\|p-q\|_2 \leq \delta := 3 \sqrt{\eps\|M\|_2}/\alpha$.
\end{claim}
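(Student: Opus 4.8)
The plan is to bound $\|p-q\|_2$ directly using the earlier structural lemmas, which already bound a weighted version $\sum_k \Pr_P[\Pi_k](p_k - q_k)^2$ of this quantity. The key observation is that $\Pr_P[\Pi_k] \geq \min_k \Pr_P[\Pi_k]$ for every $k$, so
\[
\|p-q\|_2^2 = \sum_k (p_k - q_k)^2 \leq \frac{1}{\min_k \Pr_P[\Pi_k]} \sum_k \Pr_P[\Pi_k](p_k-q_k)^2.
\]
Taking square roots, $\|p-q\|_2 \leq \sqrt{\sum_k \Pr_P[\Pi_k](p_k-q_k)^2} / \sqrt{\min_k \Pr_P[\Pi_k]}$.

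Next I would control the numerator. From Corollary~\ref{cor:correct}'s proof we know $\alpha = \Theta(\min_k \Pr_P[\Pi_k])$ (this uses Lemma~\ref{lem:empirical-moments-good} together with $\dtv(\wt P, P) \leq 2\eps$ and the assumption $\min_k \Pr_P[\Pi_k] = \Omega(\eps^{1/2}\log^{1/4}(1/\eps))$), so it suffices to produce the bound with $\alpha$ in place of $\min_k \Pr_P[\Pi_k]$. Lemma~\ref{lem:mean-dist-from-norm} gives
\[
\sqrt{\sum_k \Pr_P[\Pi_k]^2 (p_k-q_k)^2} \leq 2\sqrt{\eps \|M\|_2} + O(\eps \sqrt{\log(1/\eps)}),
\]
and since $\Pr_P[\Pi_k]^2 \geq \min_k \Pr_P[\Pi_k] \cdot \Pr_P[\Pi_k]$ is not quite what I want, I should instead follow exactly the chain in Corollary~\ref{cor:correct}: $\sqrt{\sum_k \Pr_P[\Pi_k](p_k-q_k)^2} \leq \sqrt{\sum_k \Pr_P[\Pi_k]^2(p_k-q_k)^2}/\sqrt{\min_k \Pr_P[\Pi_k]}$. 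Plugging in Lemma~\ref{lem:mean-dist-from-norm} and then dividing once more by $\sqrt{\min_k \Pr_P[\Pi_k]}$ yields
\[
\|p-q\|_2 \leq \frac{2\sqrt{\eps\|M\|_2} + O(\eps\sqrt{\log(1/\eps)})}{\min_k \Pr_P[\Pi_k]}.
\]

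Finally, I would absorb the additive $O(\eps\sqrt{\log(1/\eps)})$ term into the main term using the large-spectral-norm hypothesis $\|M\|_2 \geq C\eps\ln(1/\eps)/\alpha$ with $C$ a sufficiently large constant. Indeed $\eps\sqrt{\log(1/\eps)} = \sqrt{\eps \cdot \eps\log(1/\eps)} \leq \sqrt{\eps \cdot (\alpha\|M\|_2/C)} = \sqrt{\eps\|M\|_2}\sqrt{\alpha/C} \leq \sqrt{\eps\|M\|_2}$ for $C \geq \alpha$ (and $\alpha \leq 1$), so the numerator is at most $3\sqrt{\eps\|M\|_2}$, giving $\|p-q\|_2 \leq 3\sqrt{\eps\|M\|_2}/\min_k \Pr_P[\Pi_k] \leq 3\sqrt{\eps\|M\|_2}/\alpha = \delta$ after re-expressing via $\alpha = \Theta(\min_k\Pr_P[\Pi_k])$ with appropriate constants (alternatively one keeps $\min_k\Pr_P[\Pi_k]$ throughout and notes $\delta$ was defined with $\alpha$, which is within a constant factor). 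The main obstacle is bookkeeping the constants so that the final bound matches the \emph{exact} definition $\delta = 3\sqrt{\eps\|M\|_2}/\alpha$ rather than just an $O(\cdot)$ version — this requires being careful that the additive term is genuinely dominated (not merely comparable) and that the $\alpha$ versus $\min_k\Pr_P[\Pi_k]$ swap does not cost more than the slack built into the constant $3$; this may force choosing the hidden constant in the large-norm threshold $C$ large enough, or restating the claim with an $O(\cdot)$, which is harmless for the downstream argument.
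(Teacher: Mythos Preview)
Your proposal is correct and follows essentially the same approach as the paper: apply Lemma~\ref{lem:mean-dist-from-norm}, absorb the $O(\eps\sqrt{\log(1/\eps)})$ term using the large-norm hypothesis $\|M\|_2 \geq C\eps\ln(1/\eps)/\alpha$, and then pass from the $\Pr_P[\Pi_k]^2$-weighted sum to $\|p-q\|_2$ via $\min_k\Pr_P[\Pi_k] = \alpha + O(\eps)$. The only cosmetic difference is that the paper does the last step in one shot via $(\min_k\Pr_P[\Pi_k])^2\|p-q\|_2^2 \leq \sum_k\Pr_P[\Pi_k]^2(p_k-q_k)^2$ rather than your two successive divisions by $\sqrt{\min_k\Pr_P[\Pi_k]}$, and it absorbs the additive term \emph{before} dividing (yielding the intermediate constant $5/2$), which makes the final constant $3$ fall out cleanly without any extra bookkeeping.
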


Recall that $v^*$ is the largest eigenvector of $M$.
We project all the points $F(X,q)$ onto the direction of $v^*$.
Next we show that most of the variance of $(v^*\cdot (F(X,q) - q))$ comes from $E$.

\begin{claim} \label{clm:norm-mostly-E}
$v^{\ast T} (w_E M_E) v^{\ast} \geq \frac{1}{2} v^{\ast T} M v^{\ast}$.
\end{claim}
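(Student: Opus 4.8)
\textbf{Proof proposal for Claim~\ref{clm:norm-mostly-E}.}
The plan is to decompose $M$ along the direction $v^\ast$ using the splitting $M_{\wt P} = w_P M_P + w_E M_E - w_L M_L$ together with the fact that $M$ is spectrally close to $M_{\wt P}$ (Lemma~\ref{lem:empirical-moments-good}). Concretely, I would write
$$
v^{\ast T} M v^{\ast} \leq v^{\ast T} M_{\wt P} v^{\ast} + \|M - M_{\wt P}\|_2 \leq w_P\, v^{\ast T} M_P v^{\ast} + w_E\, v^{\ast T} M_E v^{\ast} + w_L\, v^{\ast T} M_L v^{\ast} + \eps \;,
$$
using that $-w_L v^{\ast T} M_L v^{\ast} \leq w_L v^{\ast T} M_L v^{\ast}$ (as $M_L \succeq 0$) and that $\|M-M_{\wt P}\|_2 \le \eps$. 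So it suffices to show that the three ``error'' contributions $w_P\, v^{\ast T} M_P v^{\ast}$, $w_L\, v^{\ast T} M_L v^{\ast}$, and $\eps$ are together at most $\tfrac12 v^{\ast T} M v^{\ast} = \tfrac12|\lambda^\ast|$.

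The key inputs are the earlier bounds. First, $w_P \le 2$ and Lemma~\ref{lem:MP-bound} give $w_P\, v^{\ast T} M_P v^{\ast} \le 2\|M_P\|_2 \le 2\sum_k \Pr_P[\Pi_k](p_k-q_k)^2$; by Lemma~\ref{lem:mean-dist-from-norm} (and $\min_k \Pr_P[\Pi_k] \le 1$) this is $O(\eps\|M\|_2 + \eps^2\log(1/\eps))$. Second, Lemma~\ref{lem:ML-bound} gives $w_L\, v^{\ast T} M_L v^{\ast} \le w_L\|M_L\|_2 = O(\eps\log(1/\eps) + \eps\|p-q\|_2^2)$, and by Claim~\ref{clm:delta-distance} we have $\|p-q\|_2^2 \le \delta^2 = O(\eps\|M\|_2/\alpha^2)$, so this term is $O(\eps\log(1/\eps) + \eps^2\|M\|_2/\alpha^2)$. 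Now we are in the large-spectral-norm regime $\|M\|_2 \ge C\eps\ln(1/\eps)/\alpha$, equivalently $\eps\log(1/\eps)/\alpha \le \|M\|_2/C$ and (since $\alpha \le 1$) $\eps\log(1/\eps) \le \|M\|_2/C$, so every one of these error terms is bounded by a constant (depending on $C$ and $\alpha \le 1$, but really by $O(\eps/\alpha)\cdot\|M\|_2$ or $O(1/C)\cdot\|M\|_2$) times $\|M\|_2$. Choosing the threshold constant $C$ large enough and using that $\eps$ is at most a sufficiently small constant (so $\eps/\alpha^2$ is small, which is where the hypothesis $\min_k \Pr_P[\Pi_k] = \Omega(\eps^{1/2}\log^{1/4}(1/\eps))$, hence $\alpha = \Omega(\eps^{1/2})$, is used to control $\eps^2\|M\|_2/\alpha^2 = O(\eps)\|M\|_2$), all error terms sum to at most $\tfrac12\|M\|_2 = \tfrac12 v^{\ast T} M v^{\ast}$.

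Plugging this back in yields $v^{\ast T} M v^{\ast} \le w_E\, v^{\ast T} M_E v^{\ast} + \tfrac12 v^{\ast T} M v^{\ast}$, i.e. $w_E\, v^{\ast T} M_E v^{\ast} \ge \tfrac12 v^{\ast T} M v^{\ast}$, which is the claim. The main obstacle is purely bookkeeping: making sure the constants line up, in particular that $C$ is chosen after the constants hidden in Lemmas~\ref{lem:MP-bound}, \ref{lem:ML-bound}, and \ref{lem:mean-dist-from-norm}, and that the $\eps^2\|M\|_2/\alpha^2$ term (coming from $w_L\|p-q\|_2^2$) is genuinely lower-order — this is exactly where the assumed lower bound on the parental-configuration probabilities enters, forcing $\alpha \gtrsim \sqrt{\eps}$ so that $\eps/\alpha^2 = O(1)$ and in fact $\eps^2/\alpha^2 = O(\eps)$ is small.
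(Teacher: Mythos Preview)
Your approach is essentially the paper's: the paper simply cites Lemma~\ref{lem:MApprox} (which already packages together $\|M-M_{\wt P}\|_2\le\eps$, $\|M_P\|_2$, and $w_L\|M_L\|_2$) to get $\|M-w_E M_E\|_2 \le O\bigl(\eps\log(1/\eps)+\sum_k \Pr_P[\Pi_k](p_k-q_k)^2\bigr)$, bounds the sum by $O\bigl((\eps/\min_k\Pr_P[\Pi_k])\,\|M\|_2\bigr)$, and concludes exactly as you do.

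There is one slip in your write-up. When you invoke Lemma~\ref{lem:mean-dist-from-norm} to control $\sum_k \Pr_P[\Pi_k](p_k-q_k)^2$, note that the lemma bounds $\sum_k \Pr_P[\Pi_k]^{2}(p_k-q_k)^2$; since $\Pr_P[\Pi_k]\le 1$, the inequality between these two sums runs the \emph{wrong} direction for your purposes. The correct move (and this is what the paper does) is to divide through by $\min_k\Pr_P[\Pi_k]$:
\[
\sum_k \Pr_P[\Pi_k](p_k-q_k)^2 \;\le\; \frac{1}{\min_k\Pr_P[\Pi_k]}\sum_k \Pr_P[\Pi_k]^{2}(p_k-q_k)^2 \;=\; O\!\left(\frac{\eps}{\min_k\Pr_P[\Pi_k]}\right)\|M\|_2,
\]
and then the parental-configuration lower bound you already invoke (so that $\eps/\min_k\Pr_P[\Pi_k]$ is small) makes this term at most a small constant times $\|M\|_2$. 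With this correction the argument is complete and matches the paper.
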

Claim~\ref{clm:norm-mostly-E} follows from the observation that $\|M - w_E M_E\|_2$ is much smaller than $\|M\|_2$.
This is obtained by substituting the bound on $\|p-q\|_2$ (in terms of $\|M\|_2$) from Claim \ref{clm:delta-distance} into the bound on $\|M - w_E M_E\|_2$ given by Lemma \ref{lem:MApprox}.

Claim~\ref{clm:norm-mostly-E} implies that the tails of $w_E E$ are reasonably thick.
In particular, we show that there must be a threshold $T>0$
satisfying the desired property in Step~9 of our algorithm.

\begin{lemma} \label{lem:exists-T-dist}
There exists a $T \geq 0$ such that
\[
\Pr_{X \in_u S'}[|v\cdot (F(X,q)-q)|>T+\delta] > 7\exp(-T^2/2)+3\epsilon/(T^2 \ln d) \;.
\]
\end{lemma}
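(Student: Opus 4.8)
The plan is to argue by contradiction: suppose no such $T$ exists, so that for every $T \geq 0$ we have $\Pr_{X \sim \wt P}[|v^\ast \cdot (F(X,q) - q)| \geq T + \delta] \leq 6\exp(-T^2/2) + 6\eps/d$. I want to use this tail bound to upper bound the second moment $w_E\, v^{\ast T} M_E v^\ast = w_E\, \E_{X \sim E}[(v^\ast \cdot (F(X,q)-q))^2]$, and derive a contradiction with Claim~\ref{clm:norm-mostly-E}, which says this quantity is at least $\frac12 \|M\|_2$, together with the standing assumption $\|M\|_2 \geq C\eps\ln(1/\eps)/\alpha$ that we are in the large-norm regime.

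First I would convert the assumed tail bound on $\wt P$ into a tail bound on $w_E E$. Since $\wt P = w_P P + w_E E - w_L L$ with $E, L$ having disjoint support, we have $w_E \Pr_E[B] \leq \Pr_{\wt P}[B] + w_L \Pr_L[B]$ for any event $B$; more simply, on the support of $E$ we have $\wt P(x) \geq w_E E(x)$ since the $L$ term is supported elsewhere and $w_P P \geq 0$. Hence $w_E \Pr_{X\sim E}[|v^\ast\cdot(F(X,q)-q)|\geq s] \leq \Pr_{X\sim\wt P}[|v^\ast\cdot(F(X,q)-q)|\geq s]$ for all $s$. Substituting $s = T+\delta$ and using the contradiction hypothesis gives $w_E \Pr_{X\sim E}[|v^\ast\cdot(F(X,q)-q)|\geq T+\delta] \leq 6\exp(-T^2/2) + 6\eps/d$ for all $T\geq 0$. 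Then I would integrate: writing $Y = |v^\ast\cdot(F(X,q)-q)|$ for $X\sim E$, we have $w_E\,\E[Y^2] = 2\int_0^\infty w_E\Pr[Y\geq s]\, s\, ds$, and splitting the integral at $s=\delta$ and substituting $s = T+\delta$ on the tail, $w_E\,\E[Y^2] \ll w_E\delta^2 + \int_0^\infty (6\exp(-T^2/2)+6\eps/d)(T+\delta)\,dT$. The Gaussian part contributes $O(1+\delta)$ (finite), but the $6\eps/d$ part contributes $\int_0^\infty (6\eps/d)(T+\delta)\,dT$, which diverges — so I need to be more careful and only use the $6\eps/d$ term up to some cutoff, beyond which the Gaussian term dominates anyway since $\exp(-T^2/2)$ eventually beats any constant; more precisely, for $T \geq \sqrt{2\ln(d/\eps)}$ we have $6\exp(-T^2/2) \leq 6\eps/d$, so actually I should use $w_E\Pr[Y \geq T+\delta] \leq 12\exp(-T^2/2)$ for $T \geq \sqrt{2\ln(d/\eps)}$ and the trivial bound $w_E\Pr[Y\geq T+\delta]\leq w_E \leq 2\eps$ for smaller $T$. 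This yields $w_E\,\E[Y^2] \ll w_E\delta^2 + \eps\ln(d/\eps) + O(1) \ll \eps\delta^2 + \eps\ln(d/\eps)$.

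Now I combine the bounds. By Claim~\ref{clm:norm-mostly-E}, $\frac12\|M\|_2 \leq w_E v^{\ast T} M_E v^\ast = w_E\,\E[Y^2] \ll \eps\delta^2 + \eps\ln(d/\eps)$. Recalling $\delta = 3\sqrt{\eps\|M\|_2}/\alpha$, we get $\eps\delta^2 = 9\eps^2\|M\|_2/\alpha^2$, so the inequality becomes $\|M\|_2 \ll \eps^2\|M\|_2/\alpha^2 + \eps\ln(d/\eps)$. Since $\alpha = \Theta(\min_k\Pr_P[\Pi_k]) = \Omega(\sqrt\eps\log^{1/4}(1/\eps)) \gg \sqrt\eps$, the first term $\eps^2\|M\|_2/\alpha^2 \ll \eps\|M\|_2 \ll \|M\|_2/2$ is absorbed into the left side, leaving $\|M\|_2 \ll \eps\ln(d/\eps)$. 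But we are in the regime $\|M\|_2 \geq C\eps\ln(1/\eps)/\alpha$, and since $\alpha \leq 1$ this forces $\|M\|_2 \geq C\eps\ln(1/\eps)$, which for large enough $C$ contradicts $\|M\|_2 \ll \eps\ln(d/\eps)$ — wait, $\ln(d/\eps)$ can exceed $\ln(1/\eps)$, so I need the $\eps/d$-vs-dimension bookkeeping to work out; the resolution is that the cutoff argument actually gives $\eps \cdot \ln(d/\eps)$ only from the range where the trivial bound $2\eps$ is used up to $T \approx \sqrt{\ln(d/\eps)}$, and this is where I should double-check constants, but the key point is that $\|M\|_2 \geq C\eps\ln(1/\eps)/\alpha$ with $\alpha \ll 1$ gives room. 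The main obstacle is precisely this bookkeeping of the $\eps/d$ tail term: making sure the cutoff is chosen so that the divergent-looking integral is truncated correctly and the resulting logarithmic factor is compatible with the large-norm hypothesis. Once that is pinned down, rearranging gives the contradiction and hence the existence of the desired $T$.
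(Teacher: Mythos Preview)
Your overall strategy is exactly the paper's: assume no such $T$ exists, pass the tail bound from $\wt P$ to $w_E E$, integrate to bound $w_E\,\E_{X\sim E}[(v^\ast\cdot(F(X,q)-q))^2]$, and contradict Claim~\ref{clm:norm-mostly-E} together with the large-norm hypothesis. The place where you diverge from the paper, and where your argument breaks, is precisely the handling of the $6\eps/d$ term that you yourself flag as the obstacle.

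Your proposed workaround is backwards. You write that for $T\geq\sqrt{2\ln(d/\eps)}$ one has $6\exp(-T^2/2)\leq 6\eps/d$ and therefore $w_E\Pr[Y\geq T+\delta]\leq 12\exp(-T^2/2)$. But the inequality goes the other way: once $\exp(-T^2/2)\leq \eps/d$, the sum $6\exp(-T^2/2)+6\eps/d$ is at most $12\eps/d$, not at most $12\exp(-T^2/2)$. So for large $T$ the only upper bound you have from the hypothesis is essentially the constant $6\eps/d$, and your integral still diverges. This is also why the $\ln(d/\eps)$ you end up with cannot be traded against $C\eps\ln(1/\eps)/\alpha$: $d$ is not controlled by $\eps$, and $\alpha$ can be $\Theta(1)$, so no choice of $C$ closes the gap.

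The missing observation, which the paper uses, is the deterministic range bound
\[
|v^\ast\cdot(F(x,q)-q)|\ \leq\ \|F(x,q)-q\|_2\ \leq\ \sqrt{d}
\]
for every $x\in\{0,1\}^d$, because $F(x,q)-q$ has exactly $d$ nonzero coordinates (one per $i\in[d]$), each of absolute value at most $1$. This truncates the integral at $\sqrt{d}$, so the $6\eps/d$ contribution is at most $\int_0^{\sqrt d}(6\eps/d)\,s\,ds = 3\eps$. With that in hand, splitting at $\delta + O(\sqrt{\ln(1/w_E)})$ and using the trivial bound $w_E$ below the split and the Gaussian tail plus $O(\eps)$ above it yields
\[
w_E\,v^{\ast T}M_E v^\ast \ \ll\ w_E\delta^2 + w_E\ln(1/w_E) + \eps \ \ll\ \eps\delta^2 + \eps\log(1/\eps).
\]
Now $\eps\delta^2 = 9(\eps^2/\alpha^2)\|M\|_2 = O(\eps)\|M\|_2$ (since $\alpha=\Omega(\sqrt\eps)$) and $\eps\log(1/\eps)\leq (\alpha/C)\|M\|_2$ by the large-norm assumption, so for $C$ large enough this contradicts $w_E\,v^{\ast T}M_E v^\ast \geq \tfrac12\|M\|_2$. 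Once you insert the $\sqrt d$ cutoff, your argument becomes essentially identical to the paper's.
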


If Lemma~\ref{lem:exists-T-dist} were not true, by integrating this tail bound, we can show that $v^{\ast T} M_E v^{\ast}$ would be small.
Therefore, Step~\ref{step:findT} of Algorithm~\ref{step:findT} is guaranteed to find some valid threshold $T>0$.

Finally, we show that the set of samples $S''$ we return after the filter 
is better than $S'$ in terms of $|L|+|E|$.
This completes the proof of the second case of Proposition~\ref{prop:bal}.

\begin{claim}[Part {\em(ii)} of Proposition~\ref{prop:bal}] \label{clm:calc}
If we write $S''=S \cup E' \setminus L'$, then $|E'|+|L'| < |E|+|L|$ and $|S''| \le (1-\frac{\eps}{d\ln d})|S'|$.
\end{claim}
Claim~\ref{prop:bal} follows from the fact that $S$ is $\eps$-good, so 
we only remove at most $(3 \exp(T^2/2) + \eps/T^2 \log d)|S|$ samples 
from $S$.
Since we remove more than twice as many samples from $S'$, most of the samples we throw away are from $E$.
Moreover, we remove at least $(1-\frac{\eps}{d\ln d})|S'|$ samples because we can show that the threshold $T$ is at most $\sqrt{d}$.

\label{apx:runtime}
\paragraph{Running Time of Our Algorithm~\ref{alg:filter}}
First, $q$ and $\alpha$ can be computed in time $O(N d)$ because each sample only affects $d$ entries of $q$.
We do not explicitly write down $F(X,q)$ or $M$.
Then, we use the power method to compute the largest eigenvalue $\lambda^*$ of $M$ and the associated eigenvector $v^*$.
In each iteration, we implement matrix-vector multiplication with $M$ by writing $M v$ as $\sum_i ((F(x_i,q)-q)^T v) (F(x_i,q)-q)$ for any vector $v \in \R^m$.
Because each $(F(x_i,q)-q)$ is $d$-sparse, computing $Mv$ takes time $O(dN)$.
The power method takes $(\log m/\eps')$ iterations to find a $(1-\eps')$-approximately largest eigenvalue.
We can set $\eps'$ to a small constant, because we can tolerate a small multiplicative error in estimating the spectral norm of $M$ and we only need an approximate top eigenvector (see, e.g., Corollary~\ref{cor:correct} and Lemma~\ref{clm:norm-mostly-E}).
Thus, the power method takes time $O(d N \log m)$.
Finally, computing $|v^* \cdot (F(x,q)-q)|$ takes time $O(dN)$, then we can sort the samples and find a threshold $T$ in time $O(N \log N)$, and throw out the samples in time $O(N)$.

\makeatletter{}
\section{Experiments}
\label{sec:experiment}
We test our algorithms using data generated from both synthetic and real-world networks
(e.g., the ALARM network~\cite{BeinlichSCC89}) with synthetic noise.
All experiments were run on a laptop with 2.6 GHz CPU and 8 GB of RAM.
We found that our algorithm achieves the smallest error consistently in all trials,
and that the error of our algorithm almost matches the error of the empirical conditional probabilities 
of the uncorrupted samples.
Moreover, our algorithm can easily scale to thousands of dimensions with millions of samples.~\footnote{
The bottleneck of our algorithm is fitting millions of samples of thousands dimension all in the memory.}

\subsection{Synthetic Experiments}
The results of our synthetic experiments are shown in Figure~\ref{fig:experiment-synthetic}.
In the synthetic experiment, we set $\eps = 0.1$ and first generate a Bayes net $P$ with $100 \le m \le 1000$ parameters.
We then generate $N = \frac{10m}{\eps^2}$ samples, 
where a $(1-\eps)$-fraction of the samples come from the ground truth $P$,
and the remaining $\eps$-fraction come from a noise distribution.
The goal is to output a Bayes net $Q$ that minimizes $\dtv(P, Q)$.
\ifthenelse{\boolean{hasAppendix}}{
Since there is no closed-form expression for computing the total variation distance 
between two Bayesian networks, we use sampling to estimate $\dtv(P, Q)$ 
in our experiments (see Appendix~\ref{apx:exp-dtv} for more details).}{}

\begin{figure}[ht]
\centering
\includegraphics[trim={0.6cm 0 8.0cm 1cm},clip,width=0.45\linewidth]{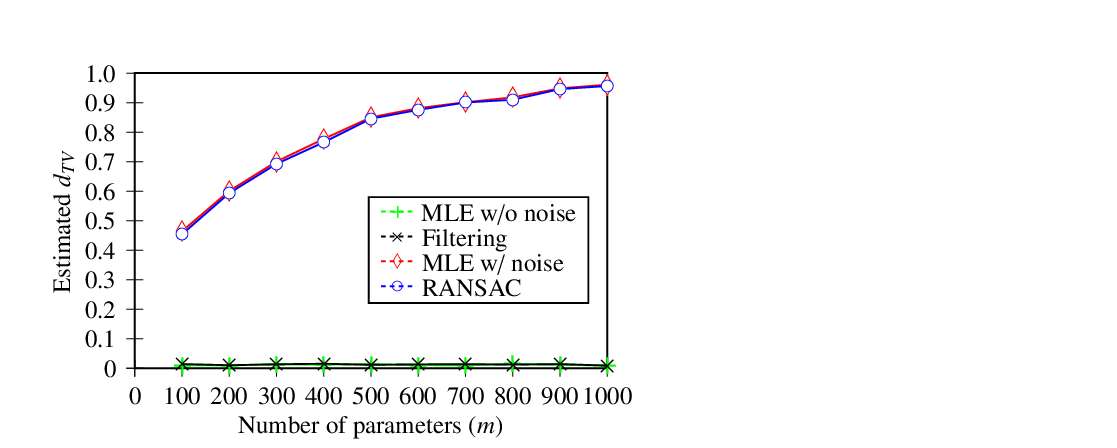} \qquad
\includegraphics[trim={0.6cm 0 8.0cm 1cm},clip,width=0.45\linewidth]{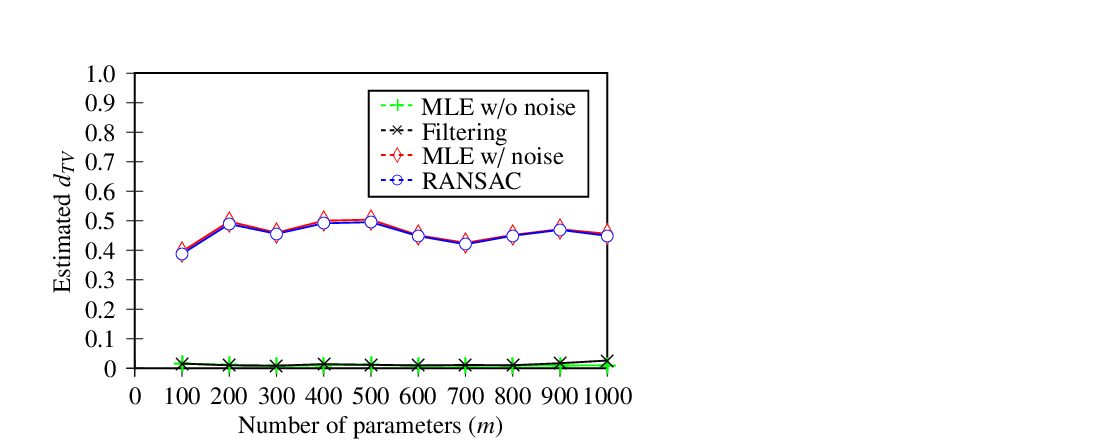}
\caption{Experiments with synthetic data: error is reported against the size of the conditional probability table (lower is better).
The error is the estimated total variation distance to the ground truth Bayes net.
We use the error of MLE without noise as our benchmark.
We plot the performance of our algorithm (\texttt{Filtering}), empirical mean with noise (\texttt{MLE}), and \texttt{RANSAC}.
We report two settings: the underlying structure of the Bayes net is a random tree (left) or a random graph (right).
}
\label{fig:experiment-synthetic}
\end{figure}

We draw the parameters of $P$ independently from $[0, 1/4] \cup [3/4, 1]$ uniformly at random, 
i.e., in a setting where the ``balancedness'' assumption does not hold.
Our experiments show that our filtering algorithm works very well in this setting,
even when the assumptions under which we can prove theoretical guarantees are not satisfied.
This complements our theoretical results and illustrates that our algorithm 
is not limited by these assumptions and can apply to more general settings in practice.

In Figure~\ref{fig:experiment-synthetic}, we compare the performance of
(1) our filtering algorithm,
(2) the empirical conditional probability table with noise, and 
(3) a RANSAC-based algorithm (see the end of Section~\ref{sec:experiment} for a detailed description).
We use the error of the empirical conditional mean without noise (i.e., MLE estimator with only good samples) as the gold standard,
since this is the best one could hope for even if all the corrupted samples are identified.
We tried various graph structures for the Bayes net $P$ and noise distributions,
and similar patterns arise for all of them.
In the top figure, the dependency graph of $P$ is a randomly generated tree,
and the noise distribution is a binary product distribution;
In the bottom figure, the dependency graph of $P$ is a random graph,
and the noise distribution is the tree Bayes net used as the ground truth 
in the first experiment.
\ifthenelse{\boolean{hasAppendix}}{The reader is referred to Appendix~\ref{apx:exp-noise} for a full description of how we generate the dependency graphs and noise distributions.}{}

\subsection{Semi-Synthetic Experiments}

In the semi-synthetic experiments, we apply our algorithm to robustly learn real-world Bayesian networks.
The ALARM network~\cite{BeinlichSCC89} is a classic Bayes net that implements a medical diagnostic system for patient monitoring.

Our experimental setup is as follows:
The underlying graph of ALARM has $37$ nodes and $509$ parameters.
Since the variables in ALARM can have up to $4$ different values,
we first transform it into an equivalent binary-valued Bayes net\ifthenelse{\boolean{hasAppendix}}{(see Appendix~\ref{apx:exp-multi-binary} for more details)}{}.
After the transformation, the network has $d = 61$ nodes and $m = 820$ parameters.
We are interested in whether our filtering algorithm can learn a Bayes net 
that is ``close'' to ALARM when samples are corrupted; 
and how many corrupted samples can our algorithm tolerate.
For $\eps = [0.05, 0.1, \ldots, 0.4]$, we draw $N = 10^6$ samples, 
where a $(1-\eps)$-fraction of the samples come from ALARM, 
and the other $\eps$-fraction comes from a noise distribution.

\begin{figure}[ht]
\begin{minipage}[c]{0.45\linewidth}
  \includegraphics[trim={0.6cm 0.2cm 8.0cm 1.1cm},clip,width=\linewidth]{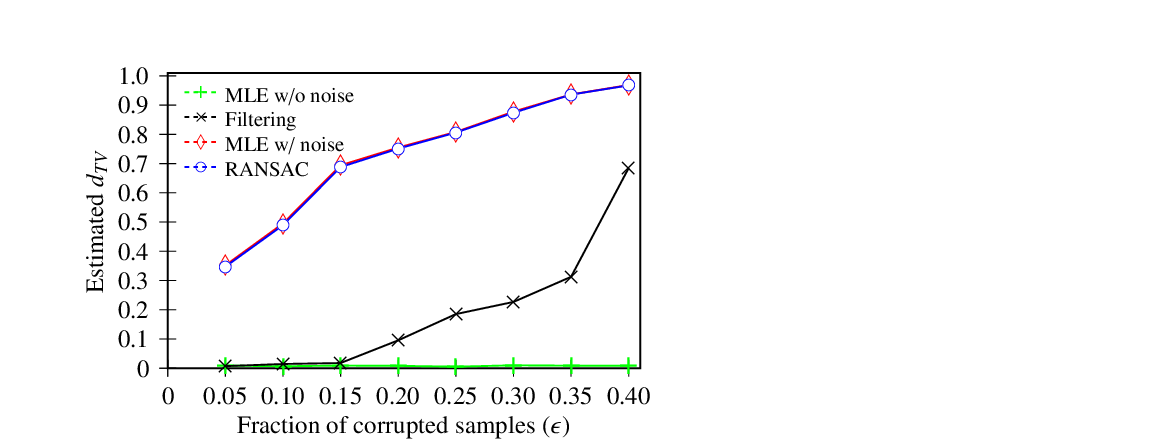}
\end{minipage}\hfill
\begin{minipage}[c]{0.52\linewidth}
  \caption{Experiments with semi-synthetic data: error is reported against the fraction of corrupted samples (lower is better).
  The error is the estimated total variation distance to the ALARM network.
  We use the sampling error without noise as a benchmark, 
  and compare the performance of our algorithm (\texttt{Filtering}), empirical mean with noise (\texttt{MLE}), and \texttt{RANSAC}.
  }
\label{fig:experiment-alarm}
\end{minipage}
\end{figure}

In Figure~\ref{fig:experiment-alarm},
we compare the performance of (1) our filtering algorithm, 
(2) the empirical conditional means with noise, 
and (3) a RANSAC-based algorithm.
We use the error of the empirical conditional means without noise as the gold standard.
We tried various noise distributions and observed similar patterns.
In Figure~\ref{fig:experiment-alarm},
the noise distribution is a Bayes net with random dependency graphs 
and conditional probabilities drawn from $[0, \frac{1}{4}] \cup [\frac{3}{4}, 1]$ 
(same as the ground-truth Bayes net in Figure~\ref{fig:experiment-synthetic}).

The experiments show that our filtering algorithm outperforms \texttt{MLE} and \texttt{RANSAC}, 
and that the error of our algorithm degrades gracefully as $\eps$ increases.
It is worth noting that even the ALARM network does not satisfy our balancedness assumption on the parameters, 
our algorithm still performs well on it and recovers the conditional probability table of ALARM in the presence of corrupted samples.

\ifthenelse{\boolean{hasAppendix}}{
\paragraph{Details of the RANSAC Algorithm.}
RANSAC uses subsampling in the hope of getting an estimator that is not affected too much by the noise.
The hope is that a small subsample might not contain any erroneous points.
The approach proceeds by computing many such estimators and appropriately selecting the best one.

In our experiments, we let RANSAC select $10\%$ of the samples uniformly at random, and repeat this process $100$ times.
After a subset of samples are selected, we compute the empirical conditional means 
and estimate the total variation distance between the corresponding Bayes net and the ground truth.
Since we know the ground truth, we can make it easier for RANSAC 
by selecting the best hypothesis that it ever produced during its execution.

The main conceptual message of our experimental evaluation of
RANSAC is that it does not perform well in high dimensions for the following reason: 
To guarantee that there are very few noisy points for such a subsample, we must take an exponential (in the dimension) number of subsets.
We are not the first to observe that RANSAC does not work for robustly learning high-dimensional distributions.
Previously, \cite{DiakonikolasKKLMS17} showed that RANSAC does not work in practice for the problem of 
robustly learning a spherical Gaussian.}{} 

\makeatletter{}\section{Conclusions and Future Directions} \label{sec:concl}
In this paper, we initiated the study of the efficient robust learning for graphical models.
We described a computationally efficient algorithm for robustly learning Bayesian networks
with a known topology, under some mild assumptions on the conditional probability table.
We evaluate our algorithm experimentally, and we view our experiments as a proof of concept demonstration that our techniques can be practical for learning fixed-structure Bayesian networks.
A challenging open problem is to generalize our results to the case when the underlying directed graph is unknown.

This work is part of a broader agenda of systematically investigating the
robust learnability of high-dimensional structured probability distributions.
There is a wealth of natural probabilistic models that merit investigation in the robust setting, 
including undirected graphical models (e.g., Ising models), and
graphical models with hidden variables (i.e., incorporating latent structure).

\paragraph{Acknowledgements.}
We are grateful to Daniel Hsu for suggesting the model of Bayes nets, and for pointing us to~\cite{Dasgupta97}.
Yu Cheng is supported in part by NSF CCF-1527084, CCF-1535972, CCF-1637397, CCF-1704656, IIS-1447554, and NSF CAREER Award CCF-1750140.
Ilias Diakonikolas is supported by NSF CAREER Award CCF-1652862 and a Sloan Research Fellowship.
Daniel Kane is supported by NSF CAREER Award CCF-1553288 and a Sloan Research Fellowship.

\newpage

\bibliographystyle{alpha}
\bibliography{allrefs}

\ifthenelse{\boolean{hasAppendix}}{
\clearpage
\appendix
\makeatletter{}
\section{Omitted Proofs from Section~\ref{sec:prelims}} \label{app:prelims}

In this section, we give proofs for the technical lemmas in Section~\ref{sec:prelims}.
Lemma~\ref{cor:bal-min} bounds the total variation distance between two balanced Bayesian networks in terms of their conditional probability tables.
Lemma~\ref{cor:bal-min} is a simple corollary of Lemma~\ref{lem:hel-bn}.

\begin{lemma} \label{lem:hel-bn}
Let $P$ and $Q$ be Bayesian networks with the same dependency graph $G$.
In terms of the conditional probability tables $p$ and $q$ of $P$ and $Q$, we have:
\[
\left(\dtv(P,Q)\right)^2 \leq 2 \sum_{k} \sqrt{\Pr_P[\Pi_k] \Pr_Q[\Pi_k]} \frac{(p_k-q_k)^2}{(p_k+q_k)(2-p_k-q_k)} \;.
\]
\end{lemma}
\begin{proof}
We will in fact show this inequality for the Hellinger distance 
$\dhel(P,Q)$ and then use the standard inequality $\dtv(P,Q) \leq \sqrt{2}\dhel(P,Q)$.

Let $A$ and $B$ be two distributions on $\{0,1\}^d$. We have:
\begin{equation} \label{squared-hellinger}
1-\dhel^2(A,B) = \sum_{x \in \{0,1\}^d} \sqrt{\Pr_A[x]\Pr_B[x]} \;.
\end{equation}
Fix $i \in [d]$. The events $\Pi_{i,a}$ form a disjoint partition of $\{0,1\}^d$.
Dividing the sum above into this partition, we obtain
\begin{align}
\label{eqn:hellinger-pi-ia}
\begin{split}
1-\dhel^2(A,B) & = \sum_{a \in \{0,1\}^{|\mathrm{Parents}(i)|}} \sum_{x \in \Pi_{i,a}} \sqrt{\Pr_A[x]\Pr_B[x]} \\
& = \sum_a \sqrt{\Pr_A[\Pi_{i,a}]\Pr_B[\Pi_{i,a}]} \sum_{x \in \{0,1\}^d} \sqrt{\Pr_{A \mid \Pi_{i,a}}[x]\Pr_{B \mid \Pi_{i,a}} [x]} \; . \end{split}
\end{align}

Let $P_{\leq i}$ and $Q_{\leq i}$ be the distribution over the first $i$ coordinates of $P$
and $Q$ respectively.
Let $P_i$ and $Q_i$ be the distribution of the $i$-th coordinate of $P$ and $Q$ respectively.
 \begin{align*}
&  1-\dhel^2\left(P_{\leq i}, Q_{\leq i}\right) \\
& = \sum_a \sqrt{\Pr_{P_{\le i}}[\Pi_{i,a}] \Pr_{Q_{\le i}}[\Pi_{i,a}]} \sum_{x_{\le i}} \sqrt{\Pr_{P_{\le i} | \Pi_{i,a}}[x] \Pr_{Q_{\le i} | \Pi_{i,a}}[x]} \\
& = \sum_a \sqrt{\Pr_{P}[\Pi_{i,a}] \Pr_{Q}[\Pi_{i,a}]} \sum_{x_{\le i-1}} \sqrt{\Pr_{P_{\le i-1} | \Pi_{i,a}}[x] \Pr_{Q_{\le i-1} | \Pi_{i,a}}[x]} \sum_{x_i} \sqrt{\Pr_{P_{i} | \Pi_{i,a}}[x] \Pr_{Q_{i} | \Pi_{i,a}}[x]} \\
& = \sum_a \sqrt{\Pr_{P}[\Pi_{i,a}] \Pr_{Q}[\Pi_{i,a}]} \sum_{x_{\le i-1}} \sqrt{\Pr_{P_{\le i-1} | \Pi_{i,a}}[x] \Pr_{Q_{\le i-1} | \Pi_{i,a}}[x]} \left(1-\dhel^2\left(P_{i}\mid \Pi_{i,a}, Q_{i} \mid\Pi_{i,a}\right)\right) \\
& = \sum_a \sqrt{\Pr_{P}[\Pi_{i,a}] \Pr_{Q}[\Pi_{i,a}]} \sum_{x_{\le i-1}} \sqrt{\Pr_{P_{\le i-1} | \Pi_{i,a}}[x] \Pr_{Q_{\le i-1} | \Pi_{i,a}}[x]} \\
& \; - \sum_a \sqrt{\Pr_{P}[\Pi_{i,a}] \Pr_{Q}[\Pi_{i,a}]} \sum_{x_{\le i-1}} \left(1-\dhel^2\left(P_{\leq i-1}\mid \Pi_{i,a}, Q_{\leq i-1} \mid\Pi_{i,a}\right)\right) \dhel^2\left(P_{i}\mid \Pi_{i,a}, Q_{i} \mid\Pi_{i,a}\right) \\
& \ge 1-\dhel^2\left(P_{\leq i-1}, Q_{\leq i -1} \right) - \sum_a \sqrt{\Pr_P[\Pi_{i,a}] \Pr_Q[\Pi_{i,a}]} \; \dhel^2\left(P_{i}\mid\Pi_{i,a}, Q_{i}\mid\Pi_{i,a}\right) \;.
 \end{align*}
The first and the fifth steps use Equation~\ref{eqn:hellinger-pi-ia}, the second step uses that the $i$-th coordinate is independent of the first $(i-1)$ coordinates conditioned on $\Pi_{i,a}$, and the third and fourth steps use Equation~\ref{squared-hellinger}.

By induction on $i$, we have
\[
\dhel^2(P,Q) \leq \sum_{(i,a) \in S} \sqrt{\Pr_P[\Pi_{i,a}] \Pr_Q[\Pi_{i,a}]} \; \dhel^2\left(P_{i}\mid\Pi_{i,a}, Q_{i}\mid\Pi_{i,a}\right) \;.
\]
Now observe that the $P_{i}\mid\Pi_{i,a}$ and $Q_{i}\mid\Pi_{i,a}$ are Bernoulli distributions
with means $p_{i,a}$ and $q_{i,a}$. For $p,q \in [0,1]$, we have:
 \begin{align*}
2 \dhel^2(\textrm{Bernoulli}(p),\textrm{Bernoulli}(q)) & = (\sqrt{p}-\sqrt{q})^2 + (\sqrt{1-p}-\sqrt{1-q})^2 \\
&= (p-q)^2 \cdot \left(\frac{1}{(\sqrt{p}+\sqrt{q})^2}+\frac{1}{(\sqrt{1-p}+\sqrt{1-q})^2}\right) \\
&\le (p-q)^2 \cdot \left(\frac{1}{p+q}+\frac{1}{2-p-q}\right) \\
&= (p-q)^2 \cdot \frac{2}{(p+q)(2-p-q)} \;,
\end{align*}
and thus
\[
\dtv(P,Q)^2 \leq 2 \dhel(P,Q)^2 \leq 2 \sum_{k} \sqrt{\Pr_P[\Pi_k]\Pr_Q[\Pi_k]} \frac{(p_k-q_k)^2}{(p_k+q_k)(2-p_k-q_k)} \;. \qedhere
\]
\end{proof}

Lemma~\ref{cor:bal-min} gives a simpler expression for total variation distance between two $c$-balanced binary Bayesian networks whose minimum probability of any $\Pi_k$ is at least $\eps$.

\vspace{\topsep}
\noindent {\bf Lemma~\ref{cor:bal-min}.}
{\em Suppose that: (i) $\min_{k} \Pr_P[\Pi_k] \geq \eps$, and (ii) $P$ or $Q$ is $c$-balanced,
and \\ (iii) $\frac{3}{c}\sqrt{\sum_k \Pr_P[\Pi_k] (p_k-q_k)^2} \leq \eps$.
Then we have that $\dtv(P,Q) \leq \eps.$
}
\begin{proof}
When either $P$ or $Q$ is $c$-balanced the denominators
in Lemma \ref{lem:hel-bn} satisfies $(p_k+q_k)(2-p_k-q_k) \geq c$
and so we have $\dtv(P,Q) \leq \frac{2}{c}\sqrt{\sum_k \sqrt{\Pr_P[\Pi_k]\Pr_Q[\Pi_k]} (p_k-q_k)^2}$.
Because we assume Conditions~{\em (i)}~and~{\em (iii)}, it suffices to show that $\Pr_Q[\Pi_k] \leq \Pr_P[\Pi_k]+\eps$, which implies $\Pr_Q[\Pi_k] \leq 2\Pr_P[\Pi_k]$ and further implies $\dtv(P,Q) \le \eps$.

We can prove $\Pr_Q[\Pi_k] \leq \Pr_P[\Pi_k]+\eps$ by induction on $i$. Suppose that for all $1 \leq j < i$ and all $a' \in \{0, 1\}^{|\text{Parents}(j)|}$,
$\Pr_Q[\Pi_{j,a'}] \leq \Pr_P[\Pi_{j,a'}]+\eps$.
Then we have $\dtv(P_{\leq i-1},Q_{\leq i-1}) \leq \eps$.
Because the event $\Pi_{i,a}$ depends only on $j < i$, we have
  $|\Pr_P[\Pi_{i,a}] - \Pr_Q[\Pi_{i,a}]| \leq \dtv(P_{\leq i-1},Q_{\leq i-1}) \leq \eps$,
  and therefore $\Pr_Q[\Pi_{i,a}] \leq \Pr_P[\Pi_{i,a}]+\eps$ for all $a$.
\end{proof}

We associate a vector $F(X)$ to each sample $X$, so that $F(X)$ contains information about the conditional means, and learning the mean of $F(X)$ to good accuracy is sufficient to recover the distribution.
Recall that $q$ is the vector of empirical conditional means, and we define $F(x,q): \{0, 1\}^d \to [0, 1]^m$ as follows (Definition~\ref{def:Fxq}):
  If $x \in \Pi_{i,a}$, then $F(x,q)_{i,a}=x_i$, otherwise $F(x,q)_{i,a}=q_{i,a}$.

We will prove some properties of $F$.
First, we note that $F$ is invertible in the following sense.

\begin{claim} \label{lem:inverse}
Fix $q \in [0,1]^m$ and $j \in [d]$.
Given $(x_1,\ldots,x_j)$, we can compute $F(x,q)_{i,a}$ for all $(i,a)$ with $i \leq j$.
We can recover $(x_1, \dots, x_j)$ from these $F(x,q)_{i,a}$ as well.
\end{claim}
\begin{proof}
By the definition of $F(x,q)$, to compute $F(x,q)_{i,a}$ we need to know $x_i$ and whether $x \in \Pi_{i,a}$.
Note that whether or not $x \in \Pi_{i,a}$ depends only on $(x_1,\dots, x_{i-1})$, so $F(x,p)_{i,a}$ is a function of $(x_1,\dots, x_{i-1}, x_i)$.

We will show by induction that $(x_1, \dots, x_j)$ can be recovered from all $F(x,p)_{i,a}$ with $i \leq j$.
Since $x_1$ has no parents, we have $x_1 = F(x,p)_{1,a'}$ for the empty bitstring $a'$.
For $i > 1$, we have that $x_i = F(x,p)_{i,a}$ for the unique $a$ with $x \in \Pi_{i,a}$, and we can decide which $a$ based on $(x_1,\dots, x_{i-1})$.
\end{proof}

Next, we show that when $q = p$ (the true conditional probabilities), 
  although the coordinates of $F(X,p)$ are not independent,
  the mean of a coordinate of $F(X,p)$ remains unchanged even if we condition on the values of previous coordinates.
\begin{claim} \label{clm:cond-mean}
We have that $\E_{X \sim P}[F(X,p)_k \mid F(X,p)_1, \dots, F(X,p)_{k-1}]=p_k$ for all $k \in [m]$.
\end{claim}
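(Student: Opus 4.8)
The goal is to establish Claim~\ref{clm:cond-mean}: that conditioning on the first $k-1$ coordinates of $F(X,p)$ leaves the conditional expectation of $F(X,p)_k$ equal to $p_k$. The plan is to reduce the conditioning on the $F$-coordinates to conditioning on the underlying bits $X_1,\dots,X_{i-1}$, where $k$ corresponds to the pair $(i,a)\in S$. The key tool is Lemma~\ref{lem:inverse}: the coordinates $F(X,p)_1,\dots,F(X,p)_{k-1}$ are exactly a (deterministic, invertible) function of the bits $X_1,\dots,X_{i-1}$, since each such coordinate $F(X,p)_{j,a'}$ with $j\le i$ depends only on $x_1,\dots,x_{j-1}$, hence on $x_1,\dots,x_{i-1}$, and conversely Lemma~\ref{lem:inverse} tells us these coordinates determine $x_1,\dots,x_{i-1}$. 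Therefore conditioning on the event $\{F(X,p)_1=\cdots\}$ up to index $k-1$ is the same as conditioning on a particular value of $(X_1,\dots,X_{i-1})$.

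Once that reduction is in place, the remaining step mirrors the argument already used in the proof of Lemma~\ref{lem:F-moments}. Fix any assignment of $X_1,\dots,X_{i-1}$. This assignment determines whether $\Pi_{i,a}$ holds. If $\Pi_{i,a}$ does hold, then $F(X,p)_{i,a}=X_i$, and by the defining property of the Bayesian network, $\Pr[X_i=1\mid X_1,\dots,X_{i-1}]=p_{i,a}$, so the conditional expectation is $p_{i,a}=p_k$. If $\Pi_{i,a}$ does not hold, then $F(X,p)_{i,a}$ is deterministically equal to $p_{i,a}=p_k$, so again the conditional expectation is $p_k$. Either way, conditioned on any value of $(X_1,\dots,X_{i-1})$, we get $\E[F(X,p)_k\mid X_1,\dots,X_{i-1}]=p_k$; averaging over the values consistent with the conditioning on $F(X,p)_1,\dots,F(X,p)_{k-1}$ gives the claim.

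I expect the only real subtlety, rather than an obstacle, to be bookkeeping the correspondence between the vector index $k$ and the pair $(i,a)$, and being careful that the lexicographic ordering of $S$ places all pairs $(j,a')$ with $j<i$ strictly before $(i,a)$ and all pairs $(i,a')$ with $a'\ne a$ at the same ``level'' $i$. For the same-level pairs, one notes that $F(X,p)_{i,a'}$ is not needed in the reduction: conditioning on $X_1,\dots,X_{i-1}$ already determines each $F(X,p)_{i,a'}$ for $a'\ne a$ (it is the constant $p_{i,a'}$ once $\Pi_{i,a'}$ is known to fail, or it equals $X_i$ for the unique surviving $a'$ — but if that surviving $a'$ equals $a$ we are fine, and otherwise it is constant), so adding them to the conditioning does not change anything, exactly as in Lemma~\ref{lem:cond-ind}. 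This is the only place where a short argument, rather than a direct citation, is required; everything else is an immediate consequence of Lemma~\ref{lem:inverse} and the definition of a Bayesian network.
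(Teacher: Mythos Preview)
Your proposal is correct and takes essentially the same approach as the paper: both use Lemma~\ref{lem:inverse} to see that $F(X,p)_1,\dots,F(X,p)_{k-1}$ determine whether $\Pi_{i,a}$ holds, and then split into the cases $\Pi_{i,a}$ and $\neg\Pi_{i,a}$. The only cosmetic difference is that in the $\Pi_{i,a}$ case the paper invokes Lemma~\ref{lem:cond-ind} directly, whereas you unpack the same-level independence argument by hand (correctly noting that your initial ``invertible function of $X_1,\dots,X_{i-1}$'' claim is not literal and needs exactly this caveat).
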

\begin{proof}
Let $k = (i,a)$. Since we order the $(i,a)$'s lexicographically, $(F(X,p)_1, \dots, F(X,p)_{k-1})$
includes $F(X)_{j,a'}$ for all $(j,a')$ with $j < i$.
By Claim~\ref{lem:inverse}, these determine the value of the parents of $X_i$,
i.e., whether or not $\Pi_{i,a}$ occurs.

If $\Pi_{i,a}$ occurs, $F(X,p)_{i,a} = X_i$ is a Bernoulli with mean $p_{i,a}$.
If $\Pi_{i,a}$ does not occur, $F(X,p)_{i,a}$ is deterministically $p_{i,a}$.
Either way, we have $\E[F(X,p)_k \mid F(X,p)_1, \ldots, F(X,p)_{k-1}]=p_k$
for all combinations of $F(X,p)_1, \ldots, F(X,p)_{k-1}$.
\end{proof}

We build on Claim~\ref{clm:cond-mean} to show that, although the coordinates of $F(X,p)$ are not independent,
the first and second moments are the same as that of a product distribution of the marginal of each coordinate.

\vspace{\topsep}
\noindent {\bf Lemma \ref{lem:F-moments}.}
{\em For $X \sim P$, we have $\E\left[ F(X,p) \right] = p$.
The covariance matrix of $F(X,p)$ satisfies $\Cov[F(X,p)]= \diag(\Pr_P[\Pi_k]p_k(1-p_k))$.
}

\begin{proof}
Note that $\E[F(X,p)]_{k} = \Pr_P[\Pi_k]p_k+(1-\Pr_P[\Pi_k])p_k=p_k$ for all $k \in [m]$.

We first show that any off-diagonal entry of the covariance matrix is $0$.
That is, for any $(i,a) \neq (j,a')$, $F(X,p)_{i,a}$ and $F(X,p)_{j,a'}$ are uncorrelated:
$\E[(F(X,p)_{i,a}-p_{i,a})(F(X,p)_{j,a'}-p_{j,a'})]=0$.
If $i=j$ then $\Pi_{i,a}$ and $\Pi_{j,a'}$ cannot simultaneously hold.
Therefore, at least one of $F(X,p)_{i,a}$ and $F(X,p)_{j,a'}$ is deterministic, so they are uncorrelated.
When $i \neq j$, we assume without loss of generality that $i > j$.
When $i > j$, we claim that conditioned on the value of $F(X,p)_{j,a'}$,
  the expected value of $F(X,p)_{i,a}$ remains the same.
In fact, Claim~\ref{clm:cond-mean} states that even after conditioning on all of $F(X,p)_{j,a'}$ with $j < i$, the expectation of $F(X,p)_{i,a}$ is $p_{i,a}$.

Finally, for any $(i,a) \in S$,
\[
\E\left[(F(X,p)_{i,a}-p_{i,a})^2\right]=\Pr_P[\Pi_{i,a}] \E\left[(X_i-p_{i,a})^2 \mid \Pi_{i,a}\right]
=\Pr_P[\Pi_{i,a}] p_{i,a}(1-p_{i,a}) \; . \qedhere
\]
\end{proof}

Finally, we will need a suitable concentration inequality that works under conditional independence.
Lemma~\ref{lem:Azuma} shows that the projections of $F(X,q)$ on any direction $v$ is concentrated around its mean.

\vspace{\topsep}
\noindent {\bf Lemma~\ref{lem:Azuma}.}
{\em
For $X \sim P$ and any unit vector $v \in \R^d$ and any $q \in [0, 1]^m$, we have}
\[
\Pr[|v \cdot (F(X, q) - q)| \geq T + \|p-q\|_2] \leq 2\exp(-T^2/2) \;.
\]
\begin{proof}
By Claim \ref{clm:cond-mean}, $\E_{X \sim P}[F(X,p)_k \mid F(X,p)_1, \dots, F(X,p)_{k-1}]=p_k$ 
for all $1 \leq k \leq m$.
Thus, the sequence
$\sum_{k=1}^\ell v_k (F(X,p)_k - p_k)$ for $1 \leq \ell \leq m$ is a martingale,
and we can apply Azuma's inequality.
Note that the maximum valueof $v_k (F(X,p)_k - p_k)$ is $v_k$, and thus
\[
\Pr[|v \cdot (F(X,p)-p)| \geq T] \leq 2\exp(-T^2/2\|v\|_2^2) =  2\exp(-T^2/2) \;.
\]
Consider an $x \in \{0,1\}^d$.
If $x \in \Pi_{i,a}$, then we have
$F(x,p)_{i,a}=F(x,q)_{i,a}=x_i$ and so
\[
|(F(x,p)_{i,a} - p_{i,a}) - (F(x,q)_{i,a}-q_{i,a})| = |p_{i,a}-q_{i,a}| \;.
\]
If $x \notin  \Pi_{i,a}$, then $F(x,p)_{i,a} = p_{i,a}$ and $F(x,q)_{i,a} = q_{i,a}$,
hence
\[
|(F(x,p)_{i,a} - p_{i,a}) - (F(x,q)_{i,a}-q_{i,a})| = 0 \;.
\]
Thus, we have
\[
\|(F(x,p)-p) - (F(x,q)-q)\|_2 \leq \|p-q\|_2 \;.
\]
An application of the Cauchy-Schwarz inequality gives that,
  if $|v \cdot (F(X,q)-q)| \geq T + \|p-q\|_2$ then $|v \cdot (F(x,p)-p)| \geq T$.
Therefore, the probability of the former holding for $X$
must be at most the probability that the latter holds for $X$.
\end{proof}

\section{Omitted Proofs from Sections~\ref{sec:alg}} \label{app:alg}

This section analyzes Algorithm~\ref{alg:filter} and gives the proof of Proposition \ref{prop:bal}.

Recall that our algorithm takes as input an $\eps$-corrupted multiset $S'$ of $N = \wt \Omega(m \log(1/\tau) /\eps^2)$ samples from a $d$-dimensional ground-truth Bayesian network $P$.
We write $S'=(S \setminus L) \cup E$, where $S$ is the set of samples before corruption,
  $L$ contains the good samples that have been removed or (in later iterations) incorrectly rejected by filters,
  and $E$ represents the remaining corrupted samples.
We mapped each sample $X$ to a vector   $F(X,q)$ which contains information about the empirical conditional means $q$.
We are interested in the spectral norm of $M$, the empirical second-moment matrix of $F(X, q) - q$ over the set $S'$ with zeros on the diagonal: $M_{k,\ell}=0$, and $M_{k,\ell} = \E_{X \in_u S'}[ (F(X,q)_k - q_\ell) (F(X,q)_k - q_\ell)^T]$ for $k \ne \ell$.

The basic idea of the analysis is as follows:
If the empirical conditional probability table $q$ is close to 
the true conditional probability table $p$ of $P$, then outputting $q$ is correct.
We know that we have enough samples that the empirical conditional probability table with no noise
$\wt p$ is a good approximation to $p$.  
Therefore, we will be in good shape
so long as the corruption of our samples
does not introduce a large error in the conditional probability table.

Thinking more concretely about this error, we may split it into two parts:
$L$, the subtractive error, and $E$ the additive error.
Using concentration results for $P$, it can be shown that
the subtractive errors cannot cause significant problems
for the conditional probability table.
It remains to consider additive errors.
The bad samples in $E$ can introduce notable errors in the conditional
probability table, since any given sample can be $\sqrt{d}$ far
from the mean.
If many of the corrupted samples line up in the same direction,
this can lead to a notable discrepancy.

However, if many of these errors line up in some direction
(which is necessary in order to have a large impact on the mean),
the effects will be reflected in the first two moments.
More concretely, if for some unit vector $v$,
the expectation of $v\cdot F(E,q)$ is very far from
the expectation of $v\cdot F(P,q)$, this will force
the variance of $v\cdot F(S',q)$ to be large.
This implies two things: First, it tells us that if $v\cdot F(S',q)$
is small for all $v$ (a condition equivalent to $\|M\|_2$ being small),
we know that $q$ is a good approximation to the true
conditional probability table. Second, if $\|M\|_2$ is large, we can find a unit vector $v$ where $v\cdot F(S', q)$ has large variance.
A reasonable fraction of this variance must be coming from samples in $E$ that have $v\cdot F(X,q)$ very far from the mean.
On the other hand, using concentration bounds
for $v\cdot F(P,q)$, we know that very few
valid samples are this far from the mean.
This discrepancy will allow us to create a filter
  which rejects more samples from $E$ than from $S$.

In Section~\ref{app:good}, we will provide a set of deterministic conditions that we expect from the good samples and show that they happen with high probability.
In Section~\ref{app:setup}, we will prove some structural lemmas about the spectrum of $M$.
In Section~\ref{app:small-norm}, we will show that if $\|M\|_2$ is small, then we can output the empirical conditional probabilities.
In Section~\ref{app:big-norm}, we will show that if $\|M\|_2$ is large, then we can use the top eigenvector of $M$ to remove bad samples.

\subsection{Deterministic Conditions that We Require on the Good Samples} \label{app:good}

Given a large enough set $S$ of good samples drawn from the ground-truth Bayesian network $P$,
Lemma~\ref{lem:good-set} states that, for $X \in_u S$, the mean, covariance, and tail bounds of $F(X,p)$ behave like we would expect them to.
We call a set of samples that satisfies these properties $\eps$-good for $P$.

\begin{lemma} \label{lem:good-set}
Let $S$ be a set of $\Omega((m \log(m/\eps) + \log(1/\tau)) \cdot \log^2 d \cdot \eps^{-2})$ samples from $P$. 
Let $p$ and $\wt p$ denote the conditional probability tables of $P$ and of the empirical distribution given by $S$ respectively. 
Then, with probability at least $1-\tau$, we have the following:
\begin{itemize}
\item[(i)] $|\Pr_{S}[\Pi_k]-\Pr_{P}[\Pi_k]| \leq \eps$,
\item[(ii)] $\sum_{k} \Pr_{S}[\Pi_{k}] (\wt p_{k}- p_{k})^2 \leq \eps^2$,
\item[(iii)] For all unit vectors $v$ and $T > 0$, we have
\[
\Pr_{X \in_u S}[|v \cdot (F(X, p) - p)| \geq T] \leq 3 \exp(-T^2/2) + \eps/(T^2 \ln d) \;.
\]
\item[(iv)] $\|\E_{X \in_u S}[(F(X, p)-p)(F(X, p)-p)^T] - \Cov_{Y \sim P}[F(Y,p)-p]\|_2 \leq O(\eps)$. 
\item[(v)] Let $A = \E_{X \in_u S}[(F(X, p)-p)(F(X, p)-p)^T]$. If $A_0$ is the matrix obtained by zeroing the diagonal of $A$, then $\|A_0\|_2 = O(\eps)$.
\end{itemize}
\end{lemma}

\begin{proof}
For (i), by the Chernoff and union bounds, with probability at least $1-\tau/10$,
we have that our empirical estimates for $\Pr_{P}[\Pi_k]$ are correct to within $\eps$
as long as we have at least $O(\log(m/\tau)/\eps^2)$ samples.

Note that for a fixed $k=(i,a)$, we have $\wt p_k$ is the empirical expectation 
of  $\Pr_{S}[\Pi_{k}] N$ independent samples from a Bernoulli with probability $p_k$. 
By Chernoff bounds, when $N \geq \Omega(m \log(m/\tau)/\eps^2)$, we have $|\wt p_{k} - p_{k}| \leq \eps/\sqrt{m \Pr_{S}[\Pi_{k}]}$ 
with probability at least $1-\tau/10m$, . By a union bound, 
this holds for all $k$ except with probability at most $1/10\tau$. 
Then we have $\sum_{k} \Pr_{S}[\Pi_{k}] (\wt p_{k}- p_{k})^2 \leq \eps^2$.

For (iii) and (iv), we first prove this happens for a fixed $v$ and $T$ 
with sufficiently high probability and then take a union bound over a cover of $v$ and $T$.
 
\begin{claim} \label{clm:1Diii}
Let $S$ be a set of $N$ samples from $P$.
Let $X \in_u S$ and $Y \sim P$.
For any unit vector $v$ and $T \geq 0$, we have that
\begin{enumerate}
\item[(i)] $|\E[(v \cdot(F(X,p)-p))^2]-\E[(v \cdot (F(Y,p)-p))^2]| \leq O(\eps)$, and
\item[(ii)] $\Pr[|v \cdot(F(X,p) - p)| \geq T] \leq 5 \exp(-T^2/2)/2 + \eps/(2T^2)$,
\end{enumerate}
with probability at least $1-\exp(-\Omega(N\eps^2))$.
\end{claim}
\begin{proof}
For the variance, note that for $Y \sim P$, $v \cdot (F(Y,p)-p)$ 
is sub-Gaussian by Lemma \ref{lem:Azuma}. 
Thus, $N\E[(v \cdot (F(X,p)-p))^2]$ is the sum of $N$ i.i.d. squares 
of sub-Gaussian random variables. By the Hanson-Wright inequality, for any $t > 0$,
\[
\Pr[|N\E[(v \cdot(F(X,p)-p))^2] - N\E[(v \cdot (F(Y,p)-p))^2]| \geq t] \leq 2 \exp(\Omega( \min\{t^2/N,t\})) \;.
\]
Applying this with $t=N\eps$, we get that $|\E[(v \cdot(F(X,p)-p))^2]-\E[(v \cdot (F(Y,p)-p))^2]| \leq \eps$ 
except with probability at most $\exp(-\Omega( \min\{N\eps^2,N\eps\}))=\exp(-\Omega(N\eps^2))$.

By Lemma \ref{lem:Azuma}, we have $\Pr[|v \cdot (F(Y,p) - p)| \geq T] \leq 2\exp(-T^2/2)$.
Hence,
\[N\Pr_{X \in_u S}[|v \cdot (F(X,p) - p)| \geq T] \]
  is the sum of $N$ i.i.d. Bernoulli random variables,
  each with mean at most $2\exp(-T^2/2)$. We use the following two versions of the Chernoff bound:
\begin{fact}
Let $Z_1,\ldots,Z_N$ be i.i.d. Bernoullis with mean $\mu$. Then
\begin{itemize}
\item[(i)] $\Pr[\sum_i Z_i/N \geq (1+\delta)\mu] \leq \exp(-\delta\ln(1+\delta) N \mu/2)$ for $\delta > 0$.
\item[(ii)] $\Pr[\sum_i Z_i/N \geq \nu] \leq \exp(-D(\nu||\mu) N)$ 
for $\nu \geq \mu$, where $D(\nu||\mu)=\nu\ln(\nu/\mu) + (1-\nu) \ln((1-\nu)/(1-\mu))$ 
is the KL-divergence between Bernoullis with probabilities $\nu$ and $\mu$ .
\end{itemize}
\end{fact}
Here $Z_i$ is a Bernoulli random variable where $Z_i = 1$ if and only if for the $i$-th sample in $S$ we have $(v \cdot (f(X_i,p) - p) \ge T)$.
Let $\nu_1 = \nu_1(T)=5 \exp(-T^2/2)/2$, $\nu_2 = \nu_2(T)=\eps/(2T^2)$, and $\nu = \nu(T) = \nu_1(T) + \nu_2(T)$.
We want to prove that
\[ \Pr_{X \in_u S}[|v \cdot(F(X,p) - p)| \geq T] = \sum_{i=1}^N Z_i / N \ge \nu(T)\]
  happens with probability at most $\exp(-\Omega(N\eps^2))$ for any $T > 0$.

We have $\mu=\Pr_{Y \sim P}[|v \cdot (f(Y,p) - p)| \geq T] \leq2\exp(-T^2/2)$.
Let $T' = \Theta(\sqrt{\log(1/\eps)})$ be such that $2\mu(T') = 4\exp(-T'^2/2) = \eps^2/(4T'^4) = \nu_2(T')^2$.
For $T \leq T'$, we use bound (i), and for $T \geq T'$, we use bound (ii).

Since $\nu \ge \nu_1 \geq 5\mu/4$, by (i),
$\sum_i Z_i/N \geq \nu$ with probability at most 
$\exp(-N \ln(5/4)(\nu-\mu)/8) \leq \exp(-N\nu/180)$.
When $T \leq T'$, $\nu \geq \nu_2 = \eps/(2T^2) = \Omega(\eps/\log(1/\eps))$, 
and so $\sum_i Z_i/N \geq \nu_1(T)$ with probability at most $\exp(-\Omega(N\eps/\log(1/\eps)))$. 

When $T \geq T'$, we have $2 \mu(T) \leq \nu_2(T)^2$ and so $\ln(\nu_2/\mu) \geq \ln(2/\mu)/2 \geq T^2/4$. 
Thus, we have
\begin{align*}
D(\nu_2||\mu)
& = \nu_2\ln(\nu_2/\mu) + (1-\nu_2) \ln((1-\nu_2)/(1-\mu)) \\
& \geq \nu_2\ln(\nu_2/\mu) + (1-\nu_2)\ln(1-\nu_2) \\
& \geq \nu_2 \ln(\nu_2/\mu) + (1-\nu_2)(-\nu_2+O(\nu_2^2)) \\
& \geq \nu_2 \cdot (\ln(\nu_2/\mu) - 1 - O(\nu_2)) \\
& \geq \nu_2 \cdot (T^2/4 -1 - O(\eps)) \\
& \geq \nu_2 \cdot (T^2/5) = \eps/10 \;.
\end{align*}
Using bound (ii), we get that  $\Pr[|v \cdot (f(X,p) - p)| \geq T] \geq \nu(T)$ 
with probability at most $\exp(-\Omega(N\eps))$. 

In either case, we can take a union bound with the probability 
that the variance was far above and get that both requirements 
hold with probability at least $1 - \exp(-\Omega(N\eps^2))$.
\end{proof}

Now we continue to prove Conditions (iii) and (iv) of Lemma~\ref{lem:good-set}.
Let $\mathcal{C}$ be an $(\eps/d)$-cover of the unit sphere in $\R^m$ 
in Euclidean distance of size $O(d/\eps)^m$. Let $\mathcal{T}$ be all multiples of 
$\sqrt{\eps}$ that are in the interval $[0, \sqrt{d}]$. 
Thus, the number of combinations of $v$ and $T$ from both covers 
is at most $|\mathcal{C}||\mathcal{T}| \leq O(d/\eps)^{m+1}$.
When $N \geq \Omega((m \log(d/\eps) + \log(1/\tau))/\eps^2)$, by a union bound, Claim \ref{clm:1Diii}~(i) holds for all $v' \in \mathcal{C}$, $T' \in \mathcal{T}$ except with probability $\exp(O((m+1)\log(d/\eps))-\Omega(N\eps^2)) \le \tau/10$.
We assume that this happens and continue to prove (iv).

Note that for every unit vector $v \in R^m$, there exists a unit vector $v' \in \mathcal{C}$  with $\|v-v'\|_2 \leq \eps/d$.
Since for all $x \in \{0,1\}^m$, $v, v' \in \R^m$, $\|(F(x,p) - p)\|_2 \leq \sqrt{d}$, 
we have $|(v \cdot (F(x,p) - p))^2-(v' \cdot (F(x,p) - p))^2| = |(v+v') \cdot (F(x,p) - p)| |(v-v') \cdot (F(x,p) - p)| \leq 2d \|v-v'\|_2$. 
Thus, 
\begin{align*}
& |\E[(v \cdot (F(X,p)-p)^2]-\E[(v \cdot (F(Y,p)-p)^2]| \\
& \leq |\E[(v' \cdot (F(X,p)-p)^2]-\E[(v' \cdot (F(Y,p)-p)^2]| + 4 \eps = O(\eps) \;.
\end{align*}
Since this holds for every unit vector $v$, we have that $\|\E_{X \in_u S}[(F(X, p)-p)(F(X, p)-p)^T] - \Cov_{Y \sim P}[F(Y,p)-p]\|_2 \leq O(\eps)$. This is (iv).

For (iii), we will use Claim~\ref{clm:1Diii}~(ii) with $\eps' = \eps/\ln(d)$.
That is, when $N \ge \Omega((m \log(d/\eps) + \log(1/\tau))/\eps'^2) = \Omega((m \log(d/\eps) + \log(1/\tau)) \cdot \log^2 d \cdot \eps^{-2})$, for every $v' \in \mathcal{C}$, $T' \in \mathcal{T}$, we have
\[
\Pr[|v \cdot(F(X,p) - p)| \geq T] \leq 5 \exp(-T^2/2)/2 + \eps'/(2T^2) = 5 \exp(-T^2/2)/2 + \eps/(2T^2 \ln d) \; .
\]

Note that $\Pr[|v \cdot (F(X,p) - p)| \geq T]=0$ for $T > \sqrt{d}$ since $\|(F(x,p) - p)\|_2 \leq \sqrt{d}$. For $T \leq 1$, $3\exp(-T^2/2) \geq 1$ and (iii) is trivial. Given a unit vector $v \in \R^m$ and $T$ with $1 \leq T \leq \sqrt{d}$, there exists a $v' \in \mathcal{C}$ and $T' \in \mathcal{C'}$ with $\|v-v'\|_2 \leq \eps/d$ and $T^2-2\eps \leq T'^2 \leq T^2-\eps$. Note that $(T-\eps/\sqrt{d})^2 \geq T^2 - \eps \geq T'^2$. Then if $|v \cdot (F(X,p) - p)| \geq T$, then $|v' \cdot (f(X,p) - p)| \geq |v \cdot (F(X,p) - p)| -\eps/\sqrt{d} \geq T'$. Now we have
\begin{align*}
\Pr[|v \cdot (F(X,p) - p)| \geq T] & \leq \Pr[|v' \cdot (F(X,p) - p)| \geq T'] \\
& \leq 5 \exp(-T'^2/2)/2 + \eps/(2T'^2 \ln d) \\
& \leq 5 \exp(\eps-T^2/2)/2 + \eps/((2T^2 - 4\eps)\ln d)\\
& \leq 3 \exp(-T^2/2) + \eps/(T^2 \ln d) \;.
\end{align*}
This completes the proof of (iii).

Finally we prove (v). 
We claim that this follows from (i), (ii) and (iv). From (iv), we have that $\|A - \Cov_{Y \sim P}[F(Y,p)-p]\|_2 \leq O(\eps)$.
By Lemma~\ref{lem:F-moments}, $\Cov_{Y\sim P}[F(Y,p)]$ is a diagonal matrix $\diag(\Pr_P[\Pi_k]p_k(1-p_k))$.
Therefore, we need to show that the diagonal elements of $A=\E_{X \in_u S}[(F(X, p)-p)(F(X, p)-p)^T]$ and $\Cov_{Y \sim P}[F(Y,p)-p]$ are close.
Let $A_\diag$ be the diagonal matrix with the diagonal entries of $A$.
\begin{align*}
\|A_0\|_2 = \|A - A_\diag\|_2 
& \leq \|A - \Cov_{Y \sim P}[F(Y,p)-p] \|_2 + \|A_\diag - \Cov_{Y \sim P}[F(Y,p)-p]\|_2 \\
& \leq O(\eps) + \max_k |A_{k,k} - \Pr_P[\Pi_k]p_k(1-p_k)| \;.
\end{align*} 

Let $\wt p_k = \Pr_{X \in_u S}[X_i = 1 | \Pi_k]$ denote the empirical conditional means.
Consider a diagonal entry of $A$.
For $k = (i,a)$, 
$A_{k,k}=\E[(F(X,p)-p)_k^2] = \Pr_S(\Pi_k)\E[(X_i-p_k)^2|\Pi_k]= \Pr_S(\Pi_k) (\wt p_k (1-p_k)^2 + (1-\wt p_k) p_k^2)=\Pr_S(\Pi_k) ( p_k^2 + \wt p_k (1-2p_k))$. Then we have 
\begin{align*}
|A_{k,k} - \Pr_P[\Pi_k]p_k(1-p_k)|
& = |\Pr_S(\Pi_k) ( p_k^2 + \wt p_k (1-2p)) - \Pr_P[\Pi_k]p_k(1-p_k)| \\
& \leq |\Pr_S(\Pi_k) - \Pr_P[\Pi_k]|p_k(1-p_k) + \Pr_S(\Pi_k)|(\wt p_k-p_k)(1-2p_k)| \\
& \leq |\Pr_S(\Pi_k) - \Pr_P[\Pi_k]| +  \Pr_S(\Pi_k)|(\wt p_k-p_k)| \leq O(\eps) \;,
\end{align*}
assuming (i) and (ii). 
This completes the proof of (v).

By a union bound, (i)-(v) all hold simultaneously with probability at least $1-\tau$.
\end{proof}

\subsection{Omitted Proofs from Section~\ref{ssec:setup}: Setup and Structural Lemmas} \label{app:setup}

In this section, we prove some structural lemmas that we will need to prove Proposition~\ref{prop:bal}.

In order to understand the second-moment matrix with zeros on the diagonal, $M$,
we will need to break down this matrix in terms of several related matrices, where the expectation is taken over different sets.
For a set $D=S',S,E$ or $L$, we use $w_D = |D|/|S'|$ to denote the fraction of the samples in~$D$.
Moreover, we use $M_D=\E_{X \in_u D}[((F(X,q)-q)(F(X,q)-q)^T]$ to denote the second-moment matrix of samples in $D$, and let $M_{D,0}$ 
be the matrix we get from zeroing out the diagonals of $M_D$.
Under this notation, we have $M_{S'} = w_S M_{S} + w_E M_{E} - w_L M_{L}$ and $M=M_{S',0}$.

First, we note that since the probabilities of the parental configurations are probabilities, the noise will not move them much.
Abusing notation, we use $\alpha$ for the {\em empirical} minimum parental configuration.
\begin{lemma} \label{lem:alpha-good}
For all $k$,
$|\Pr_{S'}[\Pi_k]-\Pr_{S}[\Pi_k]| \leq 2\eps$
and $\alpha \geq (C'-3)\eps \geq \eps$.
\end{lemma}
\begin{proof}
Proposition~\ref{prop:bal} requires that $|E|+|L| \le 2\eps|S'|$. We have
\begin{align*}
|\Pr_{S'}[\Pi_k] - \Pr_S[\Pi_k]| & = |(w_L - w_E)\Pr_S[\Pi_k] - w_L \Pr_L[\Pi_k] + w_E\Pr_E[\Pi_k]| \leq w_L + w_E \leq 2\eps \; .
\end{align*}
Since $S$ is $\eps$-good, by Lemma~\ref{lem:good-set} (i), $|\Pr_{P}[\Pi_k] - \Pr_{S}[\Pi_k]| \leq \eps$.
Since we assume that $\min_k \Pr_P [\Pi_k] \geq 4\eps$, $\alpha = \min_k \Pr_{S'} [\Pi_k] \geq \eps$.
\end{proof}

Our next step is to analyze the spectrum of $M$, and in particular show that $M$ is close in spectral norm to $w_E M_E$.
To do this, we begin by showing that the spectral norm of $M_{S,0}$ is relatively small.
Since $S$ is good, we have bounds on the second moments $F(X,p)$.
We just need to deal with the error from replacing $p$ with $q$.

\vspace{\topsep}
\noindent {\bf Lemma \ref{lem:MP-bound}.}
{\em 
$\|M_{S,0}\|_2 \leq O(\eps + \sqrt{\sum_k \Pr_S[\Pi_k] (p_k-q_k)^2} + \sum_k \Pr_S[\Pi_k] (p_k-q_k)^2)$.
}

\begin{proof}
Let $A_S$ denote the second-moment matrix of $(F(X,p) - p)$ under $S$.
\[A_S=\E_{X \in_u S}[(F(X, p)-p)(F(X, p)-p)^T] \; .\]
Let $M_{S,\diag}$,$A_{S,\diag}$ be the matrices obtained by zeroing all the non-diagonal entries of $M_S$ and $A_S$ respectively.
We will use the triangle inequality:
\[
\| M_{S,0} \|_2 \le \| M_S - A_S \|_2 + \| M_{S,\diag} - A_{S,\diag} \|_2 + \| A_{S,0} \|_2 \; .
\]

Since $S$ is $\eps$-good, we have by Lemma~\ref{lem:good-set} (iv) that the matrix $A_S$ is $O(\eps)$ close to the diagonal matrix $ \Cov_{Y \sim P}[F(Y,p)-p]$ and by Lemma~\ref{lem:good-set} (v) that the matrix formed by zeroing the diagonal of $A_S$, $A_{S,0}$, has $\|A_{S,0}\|_2 \leq O(\eps)$.

First we will show that $M_S$ is close to $A_S$, and then we will show that their diagonals are close which implies that $M_{S,0}$ is close to $A_{S,0}$.

For notational convenience, let us define $f(x,r)=F(x,r)-r$ for all $x, r \in \R^m$.
Note that $\|A_S\|_2 \leq \|\Cov_{Y \sim P}[f(Y,p)]\|_2 + O(\eps) = \|\diag(\Pr_P[\Pi_k]p_k(1-p_k))\|_2+O(\eps) \leq 1+O(\eps)$ by Lemma~\ref{lem:F-moments}.
Let $B$ be the matrix $\E_{X \in_u S}[ (f(X,q)-f(X,p)) (f(X,q)-f(X,p))^T]$.
For any unit vector $v \in \R^m$ and $X \in_u S$,
\begin{align*}
|v^T (M_S - A_S) v| & = |\E[(v \cdot f(X,q))^2 - (v \cdot f(X,p))^2]| \\
& = |\E[(v \cdot f(X,q)) (v \cdot (f(X,q)-f(X,p)))]| \\
& \quad + \E[(v \cdot f(X,p)) (v \cdot (f(X,q)-f(X,p)))]| \\
& \leq \sqrt{\E[(v \cdot f(X,q))^2]\E[(v \cdot (f(X,q)-f(X,p)))^2} ]\\
& \quad +  \sqrt{\E[(v \cdot f(X,p))^2]\E[(v \cdot (f(X,q)-f(X,p)))^2]} \\
& = \sqrt{(v^T M_S v)(v^T B v)}+\sqrt{(v^T A_S v) (v^T B v)} \\
& \leq \left(\sqrt{|v^T M_S v - v^TA_S v| + \|A_S\|_2} + \|A_S\|_2\right) \sqrt{\|B\|_2} \\
& \leq  \left(\sqrt{|v^T (M_S - A_S) v|} + 2 + O(\eps)\right) \sqrt{\|B\|_2} \; .
\end{align*}

Now if $|v^T (M_S - A_S) v| \leq 4+O(\eps)$, then $|v^T (M_S - A_S) v| \leq (4 + O(\eps)) \sqrt{\|B\|_2}$ and if $||v^T (M_S - A_S) v|| \geq 4+O(\eps)$, then $|v^T (M_S - A_S) v| \leq 2 \sqrt{|v^T (M_S - A_S) v|} \sqrt{\|B\|_2}$ and so $|v^T (M_S - A_S) v| \leq 4 \|B\|_2$. Either way, we have 
$|v^T (M_S - A_S) v| \leq O(\max\{\sqrt{\|B\|_2},\|B\|_2 \})$. This holds for all $v$ and so
\[
\|M_S-A_S\|_2 \leq O(\max\{\sqrt{\|B\|_2},\|B\|_2 \} ) \; .
\]

Now consider an entry of $B$, $B_{k,\ell}=\E[(f(X,q)-f(X,p))_k (f(X,q)-f(X,p))_\ell] $. 
For any $x \in \{0, 1\}^d$ with $x \notin \Pi_k$, $F(x,p)_k-p_k=F(x,q)_k-q_k=0$.
For any $x \in \{0, 1\}^d$ with $x \in \Pi_k$, $F(x,p)_k=F(x,q)_k=x_k$ and so $(f(x,q)-f(x,p))_k=p_k-q_k$.
 Thus if both parental configurations $\Pi_k$ and $\Pi_\ell$ are true for $x$ then $(f(x,q)-f(x,p))_k (f(x,q)-f(x,p))_\ell$ is $(q-p)_k (q-p)_\ell$ and otherwise it is $0$. Thus we have
\begin{align*}
|B_{k,\ell}| & = |\E[(f(X,q)-f(X,p))_k (f(X,q)-f(X,p))_\ell]| \\
&  = \Pr_S[\Pi_k \wedge \Pi_\ell] \cdot |(q-p)_k (q-p)_\ell| \\
& \leq \min\{\Pr_S[\Pi_k], \Pr_S[\Pi_\ell]\} \cdot |(q-p)_k (q-p)_\ell| \\
& \leq \left(\sqrt{\Pr_S[\Pi_k]}(q-p)_k\right) \cdot \left(\sqrt{\Pr_S[\Pi_\ell]}(q-p)_\ell\right) \; .
\end{align*}
Now we can bound the spectral norm of $B$ in terms of its Frobenius norm:
\begin{align*}
\|B\|_2^2 & \leq \|B\|_F^2 = \sum_{k,\ell} B_{k,\ell}^2
& \leq \sum_{k,\ell}  \left(\Pr_S[\Pi_k](q-p)_k^2\right) \cdot \left(\Pr_S[\Pi_\ell](q-p)_\ell^2\right)
  \leq \left(\sum_k \Pr_S[\Pi_k](q-p)_k^2  \right)^2 \; .
\end{align*}
Combining this with the bound on $\|M_S-A_S\|_2$ above, we obtain
\[
\|M_S-A_S\|_2 \leq O(\max\{\sqrt{\sum_k \Pr_S[\Pi_k](q_k-p_k)^2},\sum_k \Pr_S[\Pi_k](q_k-p_k)^2 \} )  \; .
\]
For the diagonal entries of $M_S$ and $A_S$, we have
\begin{align*}
\|M_{S,\diag} - A_{S,\diag}\|_2 & = \max_k |M_S - A_S|_{k,k} \\
& = \max_k |\E_{X \in_u S}[f(X,q)_k^2 - f(X,p)_k^2]| \\
& = \max_k \Pr_S[\Pi_k]| \wt p_k ((1-q_k)^2-(1-p_k)^2) + (1-\wt p_k)(q_k^2-p_k^2)| \\
&= \max_k \Pr_S[\Pi_k]| 2\wt p_k (p_k-q_k) + (q_k^2-p_k^2)| \\
& = \max_k \Pr_S[\Pi_k]|2\wt p_k- p_k - q_k| |p_k-q_k| \\
& \leq \max_k 2  \Pr_S[\Pi_k] |p_k-q_k|  \\
& \leq \max_k 2  \sqrt{\Pr_S[\Pi_k]} |p_k-q_k|
  \leq 2 \sqrt{ \sum_k \Pr_S[\Pi_k] (p_k-q_k)^2} \; .
\end{align*}
Finally, we can put all this together, obtaining
\begin{align*}
\|M_{S,0}\|_2 & \leq \|M_{S}-A_{S}\|_2 + \|M_{S,\diag}-A_{S,\diag}\|_2 + \|A_{S,0}\|_2 \\
& = O(\max\{\sqrt{\sum_k \Pr_S[\Pi_k](q_k-p_k)^2}, \sum_k \Pr_S[\Pi_k](q_k-p_k)^2 \}) + O(\eps) \; . \qedhere
\end{align*}
\end{proof}

Next, we wish to bound the contribution to $M$
coming from the subtractive error. We show that this is small
due to concentration bounds on $P$ and hence on $S$. 
The idea is that for any unit vector $v$, we have tail bounds for the random variable 
$v \cdot (F(X,q)-q)$ and, since $L$ is a subset of $S$,
$L$ can at worst consist of a small fraction of the tail of this distribution. 
Then we can show that
  $\E_{X \in_u L}[(v\cdot (F(L,q)-q))^2]$ cannot be too large.

\vspace{\topsep}
\noindent {\bf Lemma \ref{lem:ML-bound}.}
{\em
$w_L \|M_L\|_2 \leq O(\eps\log(1/\eps)+\eps\|p-q\|_2^2)$.
}

\begin{proof}
Since $L \subset S$, for any event $A$, we have that $|L| \Pr_L[A] \leq |S|\Pr_S[A]$. Note that for any $x$, since $((F(x,q)-q)-(F(x,p)-p))_i$ is either $0$ or $p_i-q_i$ for any $i$, thus $\|(F(X,q)-q)-(F(X,p)-p)\|_2 \leq \|p-q\|_2$.
Since $S$ is $\eps$-good for $P$, by Lemma~\ref{lem:good-set} (iii), we have
\[
|L| \Pr_{X \in_u L}\left[|v \cdot (F(X,q)-q)| \geq T + \|p-q\|_2\right] \leq |S|(3\exp(-T^2/2)+\eps/(T^2 \ln d)) \;.
\]
Also not that $\Pr_{X \in_u L}[|v \cdot (F(X,q)-q)| > \sqrt{d}]=0$ since $\|F(X,q)-q\|_2 \leq \sqrt{d}$.
 By definition,
$\|M_L\|_2$ is the maximum over unit vectors $v$ of $v^T M_L v$.
For any unit vector $v$, we have~\footnote{We write $f(x) \ll g(x)$ for $f(x) = O(g(x)).$}
\begin{align*}
& |L| v^T M_L v = |L| \cdot \E_{X \in_u L}\left[ (v\cdot (F(X,q)-q))^2 \right] \\
& =  2 |L| \int_{0}^{\sqrt{d}} \Pr_{X \in_u L} \left[ |v\cdot (F(X,q)-q)| \geq T \right] T dT \\
& \ll \int_{0}^{2 \|p-q\|_2 + 2 \sqrt{\ln(|S|/|L|)}} |L| T dT
  + \int_{\|p-q\|_2 + 2 \sqrt{\ln(|S|/|L|)}}^{\sqrt{d} - \|p-q\|_2} |S|\exp\left(-T^2/2 \right) (T + \|p-q\|_2) dT \\
& \quad + \int_{\|p-q\|_2 + 2 \sqrt{\ln(|S|/|L|)}}^{\sqrt{d} - \|p-q\|_2} |S|\eps (T + \|p-q\|_2)/(T^2 \log d)  dT\\
& \ll \int_{0}^{2 \|p-q\|_2 + 2 \sqrt{\ln(|S|/|L|)}} |L| T dT
  + \int_{2 \sqrt{\ln(|S|/|L|)}}^{\infty} |S|\exp\left(-T^2/2 \right) T dT
  + \int_{1}^{\sqrt{d}} |S|\eps/(T \log d)  dT\\
& \ll |L| \left(\|p-q\|_2^2 + \log(|S|/|L|)\right) + |L| + \eps|S|\\
& \ll \eps\log(1/\eps)|S'|+ \eps |S'|\|p-q\|_2^2 \; .
\end{align*}
The last inequality uses $|L |\leq 2\eps|S'|$ and $|S| \leq (1+2\eps)|S'|$.
\end{proof}

Finally, combining the above results,
since $M_{S}$ and $M_L$ have small contribution to the spectral norm of $M$ when $\|p-q\|_2$ is small,
most of it must come from $M_E$.

\vspace{\topsep}
\noindent {\bf Lemma \ref{lem:MApprox}.}
{\em
$\|M - w_E M_E\|_2 \leq O(\eps \log(1/\eps) + \sqrt{\sum_k \Pr_{S'}[\Pi_k] (p_k-q_k)^2} + \sum_k \Pr_{S'}[\Pi_k] (p_k-q_k)^2)$.
}

\begin{proof}
Note that $|S'|M=|S| M_{S,0} + |E| M_{E,0} - |L| M_{L,0}$.

Note that each entry of any of these matrices has absolute value at most one since $|F(x,q)-q|_k \leq 1$ for all $x \in \{0,1\}^d$ and $k \in [m]$. Thus we have
\[
\|M_{L,0}\| \leq \|M_L\|_2 + \|M_L-M_{L,0}\|_2 = \|M_L\|_2 + \max_k |(M_L)_{k,k}| \leq \|M_L\|_2 + 1
\]
and similarly,
\[ \|M_E-M_{E,0}\|_2 \le \max_k |(M_E)_{k,k}| \le 1 \;. \]

By the triangle inequality, Lemmas~\ref{lem:MP-bound} and \ref{lem:ML-bound}, and the assumption that $|E| + |L| \le 2\eps |S'|$,
\begin{align*}
&  \||S'|M - |E| M_E\|_2 \leq |S|\|M_{S,0}\|_2  + |L| \|M_{L,0}\|_2 +|E|\|M_E-M_{E,0}\|_2 \\
& \leq |S| \cdot  O\left(\sqrt{\sum_k \Pr_S[\Pi_k] (p_k-q_k)^2} + \sum_k \Pr_S[\Pi_k] (p_k-q_k)^2 + \eps\log(1/\eps)+\eps\|p-q\|_2^2\right) \;,
\end{align*}

Using Lemma~\ref{lem:alpha-good}, we obtain that
$\eps \|p-q\|_2^2  \leq \sum_k \Pr_{S'}[\Pi_k] (p_k-q_k)^2$ and
$\sum_k \Pr_S[\Pi_k] (p_k-q_k)^2 \leq \sum_k (\Pr_{S'}[\Pi_k]+2\eps) (p_k-q_k)^2 = O(\sum_k \Pr_{S'}[\Pi_k] (p_k-q_k)^2)$.
\end{proof}

\subsection{Omitted Proofs from Section~\ref{ssec:small-norm}: The Case of Small Spectral Norm} \label{app:small-norm}

In this section, we will prove that if $\|M\|_2 = O(\eps \log(1/\eps)/\alpha)$, then we can output the empirical conditional means $q$.

We first show that the contributions that $L$ and $E$ make to $\E_{X \in_u S'}{[F(X,q)-q)}]$ can be bounded in terms of the spectral norms of $M_L$ and $M_E$.

\vspace{\topsep}
\noindent {\bf Lemma \ref{lem:wrong-mean-covar}.}
{\em
$\left\|\E_{X \in_u L}[F(X,q)] - q \right\|_2 \leq \sqrt{\|M_L\|_2}$ and $\|\E_{X \in_u E}[F(X,q)] - q\|_2 \leq \sqrt{\|M_E\|_2}$.
}

\begin{proof}
Let $Y$ be any random variable supported on $\R^m$ and $y \in \R^m$.
Then  we have
\[
\E[(Y-y)(Y-y)^T]=\E\left[(Y-\E[Y])(Y-\E[Y])^T\right] + (\E[Y]-y)(\E[Y]-y)^T \;.
\]
Since both terms are positive semidefinite,
we have
\[
\| \E{\left[(Y-y)(Y-y)^T\right]} \|_2 \geq \|(\E[Y]-y)(\E[Y]-y)^T\|_2 = \|\E[Y]-y\|_2^2 \;.
\]
Applying this with $y=q$ and $Y=F(X,q)$ for $X \in_u L$ (or $X \in_u E$) completes the proof.
\end{proof}

Combining with the results about these norms in the previous section, Lemma~\ref{lem:wrong-mean-covar} implies that if $\|M\|_2$ is small, then $q = \E_{X \in_u S'}[F(X,q)]$ is close to $\E_{X \in_u S}[F(X,q)]$, which is then necessarily close to $\E_{X \sim P}[F(X,p)] = p$.
The following lemma states that the mean of $(F(X,q)-q)$ under the good samples is close to $(p-q)$ scaled by the probabilities of parental configurations under $S'$.

\vspace{\topsep}
\noindent {\bf Lemma \ref{lem:mean-transformed}.}
{\em
Let $z \in \R^m$ be the vector with $z_k=\Pr_{S'}[\Pi_k] (p_k-q_k)$.  Then,
$\| \E_{X \in_u S}[F(X,q)-q] - z\|_2 \leq O(\eps (1+\|p-q\|_2)).$
}
\begin{proof}
When $\Pi_k$ does not occur, $F(X,q)_k=q_k$.
Thus, we can write:
\begin{align*}
\E_{X \in_u S}[F(X,q)_k-q_k] & = \Pr_S[\Pi_k] \E_{X \in_u S}\left[F(X,q)_k-q_k \mid \Pi_k \right] \\
& = \Pr_S[\Pi_k] (\wt p_k-q_k) =  \Pr_S[\Pi_k] (p_k-q_k) + \Pr_S[\Pi_k] (\wt p_k-p_k) \;,
\end{align*}
Since $S$ is $\eps$-good, by Lemma~\ref{lem:good-set} (ii), we know that $\sum_k \Pr_S[\Pi_k]^2 (\wt p_k-p_k)^2 \leq \eps^2$ and so if $z'$ is the vector with $z'_k=\Pr_{S}[\Pi_k] (p_k-q_k)$, then  $\|\E_{X \in_u S}[F(X,q)-q] - z\|_2 \leq \eps$.

By Lemma~\ref{lem:alpha-good}, $|\Pr_{S'}[\Pi_k] - \Pr_S[\Pi_k]| \leq 2\eps$, so we have $|\Pr_{S'}[\Pi_k] (p_k-q_k) - \Pr_{S}[\Pi_k] (p_k-q_k)| \leq 2\eps|p_k-q_k|$, i.e., $|z_k - z'_k| \le 2\eps |p_k - q_k|$, and thus $\|z-z'\|_2 \leq 2\eps \|p-q\|_2$.
\end{proof}

Note that $z = \left(\Pr_{S'}[\Pi_k] (p_k-q_k)\right)_{k=1}^m$ is closely related to the total variation distance between $P$ and $Q$ (see Lemma~\ref{cor:bal-min}).
We can write $(\E_{X \in_u S'}[F(X,q)]-q)$ in terms of this expectation under $S, E$, and $L$ whose distance from $q$ can be upper bounded using the previous lemmas.
Using Lemmas~\ref{lem:ML-bound},~\ref{lem:MApprox},~\ref{lem:wrong-mean-covar},~and~\ref{lem:mean-transformed}, we can bound $\|z\|_2$ in terms of $\|M\|_2$.

\vspace{\topsep}
\noindent {\bf Lemma \ref{lem:mean-dist-from-norm}.}
{\em
$\sqrt{\sum_k \Pr_{S'}[\Pi_k]^2 (p_k-q_k)^2} \leq 2\sqrt{\eps \|M\|_2} + O(\eps \sqrt{\log(1/\eps)+1/\alpha})$.
}

\begin{proof}
For notational simplicity, let $D \in \R^{m \times m}$ be a diagonal matrix with $D_{k,k}=\diag(\sqrt{\Pr_{S'}[\Pi_k]})$.
We have
\[
\sqrt{\sum_k \Pr_{S'}[\Pi_k] (p_k-q_k)^2} = \|D(p-q)\|_2 \; .
\]

Let $\mu^{S'}$, $\mu^{S}$, $\mu^L$ and $\mu^E$ be $\E[F(X,q)]$ for $X$ taken uniformly from $S'$, $S$, $L$ or $E$ respectively.
We have the identity
\[
|S'|\mu^{S'} = |S| \mu^S - |L| \mu^L + |E| \mu^E \;.
\]
Or equivalently,
\[
|S'|(\mu^S - \mu^{S'}) = (|S'| - |S|) \mu^S + |L| \mu^L - |E| \mu^E \;.
\]

Note that $\mu^{S'}=q$.
By Lemma~\ref{lem:mean-transformed}, $\|(\mu^{S} - q) -  (D^2(p-q))\|_2 \leq O(\eps(1+\|p-q\|_2))$.
Recall that $w_L=|L|/|S'|$ and $w_E=|E|/|S'|$.
By the triangle inequality,
\begin{align*}
& \|D^2(p-q)\|_2 \le \|\mu^{S}-q\|_2 +  O(\eps(1+\|p-q\|_2) \\
& = \|w_L(\mu^L-q) - w_E(\mu^E-q)+(1-w_S)(\mu^S-q)\|_2 + O(\eps+ \eps\|p-q\|_2))\\
& \leq w_L\|\mu^L-q\|_2 + w_E\|\mu^E-q\|_2 + 2\eps \|(\mu^{S} - q)\|_2 +  O(\eps+ \eps\|p-q\|_2)\\
& \leq w_L\|\mu^L-q\|_2 + w_E\|\mu^E-q\|_2 + O\left(\eps\|D^2(p-q)\|_2\right) + O(\eps+ \eps\|p-q\|_2)) \\
& \leq w_L \sqrt{\|M_L\|_2} + w_E \sqrt{\|M_E\|_2} + O\left(\eps\|D^2(p-q)\|_2\right) + O(\eps+ \eps\|p-q\|_2))\\
& \leq O\left(\eps \sqrt{\log(1/\eps)}\right) + \sqrt{2\eps \|M\|_2} + O\left(\sqrt{\eps}\|D(p-q)\|_2 + \sqrt{\eps\|D(p-q)\|_2}\right) \\
& \leq O\left(\eps \sqrt{\log(1/\eps)}\right) + \sqrt{2\eps \|M\|_2} + O\left(\sqrt{\eps/\alpha}\|D^2(p-q)\|_2 + \sqrt{\eps\|D^2(p-q)\|_2/\sqrt{\alpha}}\right) \\
& \leq O\left(\eps \sqrt{\log(1/\eps)}\right) + (3/2)\sqrt{\eps \|M\|_2} + \|D^2(p-q)\|_2/8 + O\left(\sqrt{\eps\|D^2(p-q)\|_2/\sqrt{\alpha}}\right) \;,
\end{align*}
where we used the assumption that $\alpha/\eps$ is at least a sufficiently large constant and that $\|D^{-1}\|_2=1/\sqrt{\alpha}$.
When this last term is smaller than $\|D^2(p-q)\|_2/8$, rearranging the inequality gives that
\[
\|D^2(p-q)\|_2 \leq 2\sqrt{\eps \|M\|_2} + O(\eps \sqrt{\log(1/\eps)}) \; .
\]
Otherwise, we have $\|D^2(p-q)\|_2= O\left(\sqrt{\eps\|D^2(p-q)\|_2/\sqrt{\alpha}}\right)$, and so $\|D^2(p-q)\|_2 = O(\eps/\sqrt{\alpha})$. In either case, we obtain
\[
\|D^2(p-q)\|_2 \leq 2\sqrt{\eps \|M\|_2} + O(\eps \sqrt{\log(1/\eps) + 1/\alpha}) \; . \qedhere
\]
\end{proof}

Lemma~\ref{lem:mean-dist-from-norm} implies that, if $\|M\|_2$ is small then so is $\sqrt{\sum_k \Pr_{S'}[\Pi_k]^2 (p_k-q_k)^2}$.
We can then use it to show that $\sqrt{\sum_k \Pr_P[\Pi_k] (p_k-q_k)^2}$ is small.
We can do so by losing a factor of $1/\sqrt{\alpha}$ to remove the square on $\Pr_{S'}[\Pi_k]$, and showing that $\min_k \Pr_S'[\Pi_k] = \Theta(\min_k \Pr_P[\Pi_k])$ when it is at least a large multiple of $\eps$.
Finally, if $\sqrt{\sum_k \Pr_P[\Pi_k] (p_k-q_k)^2}$ is small, Lemma~\ref{cor:bal-min} tells us that $\dtv(P,Q)$ is small.
This completes the proof of the first case of Proposition~\ref{prop:bal}.

\vspace{\topsep}
\noindent {\bf Corollary \ref{cor:correct}}\;\;(Part {\em (i)} of Proposition~\ref{prop:bal}){\bf .}
{\em
If $\|M\|_2 \leq O(\eps \log(1/\eps)/\alpha)$, then

$ \qquad \dtv(P,Q)=O(\eps\sqrt{\log(1/\eps)}/(c \min_k \Pr_P[\Pi_k])).$
}
\begin{proof}
Recall that $\alpha = \min_k |\Pr_{S'}[\Pi_k]$.
By Lemma~\ref{lem:alpha-good}, $|\Pr_{S'}[\Pi_k]-\Pr_{S}[\Pi_k]| \leq 2\eps$ for any $k$.
Since $S$ is $\eps$-good, $|\Pr_{P}[\Pi_k]-\Pr_{S}[\Pi_k]| \leq \eps$.
Combining these, we obtain $|\alpha-\min_k \Pr_P[\Pi_k]| \leq 3\eps$.
By assumption $\min_k \Pr_P[\Pi_k] \geq 4\eps$ , so we have $\alpha = \Theta(\min_k \Pr_P[\Pi_k])$.

Therefore,
\begin{align*}
\sqrt{\sum_k \Pr_P[\Pi_k] (p_k-q_k)^2}
&\leq  \sqrt{\sum_k \Pr_P[\Pi_k]^2 (p_k-q_k)^2}/\sqrt{\min_k \Pr_P[\Pi_k]} \\
&\leq \left(2\sqrt{\eps \|M\|_2} + O(\eps \sqrt{\log(1/\eps)+1/\alpha})\right)/\sqrt{\min_k \Pr_P[\Pi_k]} \tag*{(by Lemma \ref{lem:mean-dist-from-norm})} \\
&\leq O\left(\eps \sqrt{\log(1/\eps)}/\min_k \Pr_P[\Pi_k]\right) \;.
\end{align*}
Now we can apply Lemma~\ref{cor:bal-min} to get $\dtv(P,Q) \le O\left(\eps \sqrt{\log(1/\eps)}/(c\min_k \Pr_P[\Pi_k])\right)$.
\end{proof}

\subsection{Omitted Proofs from Section~\ref{ssec:big-norm}: The Case of Large Spectral Norm} \label{app:big-norm}
Now we consider the case when $\|M\|_2 \geq C\eps \ln(1/\eps)/\alpha$ for some sufficiently large constant $C > 0$.
We begin by showing that $p$ and $q$ are not too far apart from each other.
The bound given by Lemma \ref{lem:mean-dist-from-norm} is 
now dominated by the $\|M\|_2$ term. 
Lower bounding the $\Pr_{S'}[\Pi_k]$ by $\alpha$ gives the following claim.

\vspace{\topsep}
\noindent {\bf Claim \ref{clm:delta-distance}.}
{\em
$\|p-q\|_2 \leq \delta := 3 \sqrt{\eps\|M\|_2}/\alpha$.
}

\begin{proof}
By Lemma \ref{lem:mean-dist-from-norm}, we have that
\[\sqrt{\sum_k \Pr_{S'}[\Pi_k]^2 (p_k-q_k)^2} \leq 2\sqrt{\eps \|M\|_2} + O(\eps \sqrt{\log(1/\eps) + 1/\alpha}) \; .\]
For sufficiently large $C$, this last term is smaller than
$\eps \sqrt{C \ln(1/\eps)/\alpha} \le \frac{1}{2} \sqrt{\eps \|M\|_2}$.
Then we have $\sqrt{\sum_k \Pr_{S'}[\Pi_k]^2 (p_k-q_k)^2} \leq (5/2)\sqrt{\eps \|M\|_2}$.
Recall that $\alpha = \min_k \Pr_{S'}[\Pi_k]$, so
\[
\alpha \|p-q\|_2 = \sqrt{\sum_k \alpha^2 (p_k-q_k)^2} \le \sqrt{\sum_k \Pr_{S'}[\Pi_k]^2(p_k-q_k)^2} \leq (5/2)\sqrt{\eps \|M\|_2} \; . \qedhere
\]
\end{proof}

Recall that $v^*$ is the largest eigenvector of $M$.
We project all the points $F(X,q)$ onto the direction of $v^*$.
Next we show that most of the variance of $(v^*\cdot (F(X,q) - q))$ comes from $E$.

\vspace{\topsep}
\noindent {\bf Claim \ref{clm:norm-mostly-E}.}
{\em 
$v^{\ast T} (w_E M_E) v^{\ast} \geq \frac{1}{2} v^{\ast T} M v^{\ast}$.
}

\begin{proof}
By Lemma \ref{lem:MApprox} and Claim \ref{clm:delta-distance}, we deduce
 \begin{align*}
\|M - w_E M_E\|_2 
&\leq O\left(\eps \log(1/\eps) + \sum_k \Pr_{S'}[\Pi_k] (p_k-q_k)^2 + \sqrt{\sum_k \Pr_{S'}[\Pi_k] (p_k-q_k)^2} \right) \\
&\leq O\left(\eps \log(1/\eps) + \eps\|M\|_2/\alpha + \sqrt{\eps\|M\|_2/\alpha} \right) \;.
\end{align*}
By assumption $\min_k \Pr_P[\Pi_k] \geq C'\eps$ for sufficiently large $C'$, so we can assume $\eps/\alpha \leq 1/6$. 
For large enough $C$, $\|M\|_2 \geq C\eps \ln(1/\eps)/\alpha \geq 36\eps/\alpha$,
  and hence the third term $\sqrt{\eps\|M\|_2/\alpha} \leq \|M\|_2/6$.
Again for large enough $C$, the first term is upper bounded by $\|M\|_2/6$.
Thus, we obtain $2 \|M - w_E M_E\|_2 \leq \|M\|_2 = v^{\ast T} M v^{\ast}$ as required.
\end{proof}

Claim~\ref{clm:norm-mostly-E} implies that the tails of $E$ are reasonably thick.
In particular, the next lemma shows that is guaranteed to find some valid threshold $T>0$ satisfying the desired property in Step~\ref{step:findT} of Algorithm~\ref{alg:filter}, otherwise by integrating the tail bound, we can show that $v^{\ast T} M_E v^{\ast}$ would be small.

\vspace{\topsep}
\noindent {\bf Lemma \ref{lem:exists-T-dist}.}
{\em 
There exists a $T \geq 0$ such that

$
\qquad \Pr_{X \in_u S'}[|v\cdot (F(X,q)-q)|>T+\delta] > 7\exp(-T^2/2)+3\eps/(T^2 \ln d).
$
}

\begin{proof}
Suppose for the sake of contradiction that this does not hold.
Since $E \subset S'$, for all events $A$, it holds that $|E| \cdot \Pr_E[A] \leq |S'| \cdot \Pr_{S'}[A]$.
Thus, we have
\[
w_E \Pr_{X \in_u E}[|v^\ast \cdot (F(X,q) - q)| \geq T + \delta] \leq 7\exp(-T^2/2)+3\eps/(T^2 \ln d) \;.
\]
Note that for any $x \in \{0,1\}^d$,
we have that
\[
|v^\ast \cdot (F(x,q)-q)| \leq \|F(x,q)-q\|_2 \leq \sqrt{d} \;,
\]
since $F(x,q)$ and $q$ differ on at most $d$ coordinates. We have the following sequence of inequalities:
\begin{align*}
& \|M\|_2 \ll w_E v^{\ast T} M_E v^{\ast} \\
& = 2 w_E \int_0^{\sqrt{d}} \Pr_{X \in_u E}\left[|v^\ast \cdot (F(X,q)-q)| \geq T\right] T dT \\
& \leq 2 w_E \int_0^{2 \delta + 2 \sqrt{\ln(1/w_E)}} T dT + \int_{\delta + 2 \sqrt{\ln(1/w_E)}}^{\sqrt{d} - \delta} \left(7 \exp(-T^2/2)+3\eps/(T^2 \ln d) \right) (T + \delta) dT \\
& \ll w_E \int_0^{2 \delta + 2 \sqrt{\ln(1/w_E)}} T dT + \int_{\delta + 2 \sqrt{\ln(1/w_E)}}^{\infty} \exp(-T^2/2) T dT + \int_{1}^{\sqrt{d}} \eps / (T \log d) \\
& \ll w_E \delta^2 + w_E \log(1/w_E) + \eps \\
& \ll \eps \delta^2 + \eps \log(1/\eps) \\
& \ll (\eps^2/\alpha^2)\|M\|_2 + \alpha \|M\|_2/C \\
& \ll \|M\|_2/(C'-3)^2 + \|M\|_2/C \; .
\end{align*}
For sufficiently large $C'$ and $C$, this gives the desired contradiction.
\end{proof}

Finally, we show that the set of samples $S''$ we return after the filter 
is better than $S'$ in terms of $|L|+|E|$.
This completes the proof of the second case of Proposition~\ref{prop:bal}.

\vspace{\topsep}
\noindent {\bf Claim \ref{clm:calc}.}\;\;(Part {\em (ii)} of Proposition~\ref{prop:bal}){\bf .}
{\em 
If we write $S''=S \cup E' \setminus L'$, then $|E'|+|L'| < |E|+|L|$ and $|S''| \le (1-\frac{\eps}{d\ln d})|S'|$.
}

\begin{proof}
Note that $(F(x,q)-q)$ is $d$-sparse and has $\| F(x,q)-q \|_\infty \le 1$, so we have $|v \cdot (F(x,q)-q)| \le \| F(x,q)-q \|_2 \le \sqrt{d}$.
By Lemma~\ref{lem:exists-T-dist}, when $\|M\|_2 \geq C\eps \ln(1/\eps)/\alpha$ for some sufficiently large constant $C > 0$, Step~\ref{step:findT} of Algorithm~\ref{alg:filter} is guaranteed to find a threshold $0 < T \le \sqrt{d}$ such that
\[
\Pr_{X \in_u S'}[|v\cdot (F(X,q)-q)|>T+\delta] > 7\exp(-T^2/2)+3\eps/(T^2 \ln d) \;.
\]
Thus, we have
\[
|S'| - |S''| > (7\exp(-T^2/2)+3\eps/(T^2 \ln d))|S'| \;.
\]
In particular, we can show that the number of remaining samples reduces by a factor of $(1-\eps/(d \ln d))$:
\[
|S'| - |S''| > 3\eps/(T^2 \ln d) \cdot |S'| \ge (\eps/(d \ln d) \cdot |S'| \;.
\]

Since $S$ is $\eps$-good, by Lemma \ref{lem:good-set} (iii),
\[
\Pr_{X \in_u S}[|v \cdot (F(X, p) - p)| \geq T] \leq 3 \exp(-T^2/2) + \eps/(T^2 \ln d) \; .
\]
Using Claim \ref{clm:delta-distance}, we have that for all $x \in \{0,1\}^d$, $\|(F(x,q)-q)-(F(x,p)-p)\|_2 \leq \|p-q\|_2 \leq \delta$.
Therefore,
\[
\Pr_{X \in_u S}[|v \cdot (F(X, q) - q)| \geq T + \delta] \leq 3 \exp(-T^2/2) + \eps/(T^2 \ln d) \; .
\]
Since all the filtered samples $L' \setminus L$ are in $S$, we have
\[
|L'| - |L| \leq (3 \exp(-T^2/2) + \eps/(T^2 \ln d))|S| \; .
\]
Thus $|S'|-|S''| \geq \frac{7}{3}(1-2\eps)(|L'| - |L|)$.
Since $|S'|-|S''|=|E|-|L| - |E'|+|L'|$,
\[
|E|+|L| - |E'|-|L'| = (|S'|-|S''|) - 2(|L'|-|L''|) \geq (1/7 - O(\eps)) (|S'| - |S''|) \; .
\]
Because $S'' \subset S'$, we conclude that $|E'|+|L'| < |E|+|L|$.
\end{proof}

\section{Omitted Details from Section \ref{sec:experiment}}
\label{apx:experiment}

\subsection{Details of the Experiments}
\label{apx:exp-noise}
In this section, we give a detailed description of the graph structures 
and noise distributions we used in our experimental evaluation.

In our experiments, when there is randomness in the dependency graphs 
of the ground-truth or in the noisy Bayesian networks, 
we repeat the experiment ten times   and report the average error.

\paragraph{Synthetic Experiments with Tree Bayesian Networks.}
In the first experiment, the ground-truth Bayesian network $P$ is generated as follows:
We first generate a random dependence tree for $P$.
We label the $d$ nodes $\{1, \ldots, d\}$.
Node $1$ has no parents, and every node $i > 1$ has one parent drawn uniformly from $\{1, \ldots, i-1\}$.
The size of the conditional probability table of $P$ is $2d-1$ 
(one parameter for the first node and two parameters for all other nodes).
We then draw these $(2d-1)$ conditional probabilities independently and uniformly from $[0, \frac{1}{4}] \cup [\frac{3}{4}, 1]$.

The noise distribution is a binary product distribution 
with mean drawn independently and uniformly from $[0, 1]$ for each coordinate.

\paragraph{Synthetic Experiments with General Bayesian Networks.}
In the second experiment, we generate the ground-truth network $P$ as follows:
We start with an empty dependency graph with $d = 50$ nodes.
We label the $d$ nodes $\{1, \ldots, d\}$ and require that parent nodes must have smaller index.
We continue to try to increase the in-degree of a random node until the number of parameters $m = \sum_{i=1}^d 2^{|\text{Parent}(i)|}$ exceeds the target $m \in [100, 1000]$.
Then for each $i$, we draw the $|\text{Parents(i)}|$ nodes uniformly from the set $\{1, \ldots, i-1\}$ to be the parents of variable $i$.

The noise distribution is a tree-structured Bayes net generated in the same way as we generate the ground-truth network in the first experiment.
The conditional probabilities of both $P$ and the noise distribution are drawn independently and uniformly from $[0, \frac{1}{4}] \cup [\frac{3}{4}, 1]$.

\paragraph{Semi-Synthetic Experiments with ALARM.}
In the third experiment, the ground-truth network is a binary-valued 
Bayes net which is equivalent to the ALARM network.
See Section~\ref{apx:exp-multi-binary} for a detailed description of the conversion process.
Specifically, it has $d = 61$ nodes and $m = 820$ parameters.

The noise distribution is a Bayes net generated using the same process as we create the ground-truth Bayes net in the second experiment.
We start with an empty graph with $d = 61$ nodes and add edges until the number of parameters is roughly $m = 820$.
The conditional probabilities of the noise distribution are again drawn from $[0, \frac{1}{4}] \cup [\frac{3}{4}, 1]$.

\subsection{Estimating the Total Variation Distances between Two Bayesian Networks}
\label{apx:exp-dtv}
Given two $d$-dimensional Bayesian networks $P$ and $Q$ 
(explicitly with their dependency graphs and conditional probability tables), we want to compute the total variation distance between them.
Since there is no closed-form formula, we use sampling to estimate $\dtv(P, Q)$.
By definition,
\[
\dtv(P, Q) = \sum_{x \in \{0,1\}^d, P(x) > Q(x)} \left(P(x) - Q(x)\right) \;.
\]
Let $A = \{x \in \{0,1\}^d: P(x) > Q(x) \}$. We have $\dtv(P, Q) = P(A) - Q(A)$ where $P(A) = \sum_{x \in A} P(x)$ and $Q(A) = \sum_{x \in A} Q(x)$.
We can draw samples from $P$ to estimate $P(A)$, 
since for a fixed $x \in \{0, 1\}^d$, we can efficiently test whether $x \in A$ or not 
by computing the log-likelihood of $x$ under $P$ and $Q$.
Similarly, we can estimate $Q(A)$ by drawing samples from $Q$.

In all of our evaluations, we take $N = 10^6$ samples from $P$ to estimate $P(A)$ (and similarly for $Q(A)$).
By Hoeffding's inequality, the probability that our estimate of $P(A)$ is off by more than $\eps$ is at most $2 \exp(-2 N \eps^2)$.
For example, with probability $0.99$, we can estimate both $P(A)$ and $Q(A)$ within additive error $0.2\%$, 
  which gives an estimate for $\dtv(P, Q)$ within additive error $0.4\%$.

\subsection{Reduction to Binary-Valued Bayesian Networks}
\label{apx:exp-multi-binary}

The results in this paper can be easily extended to multi-valued Bayesian networks.
We can represent a $d$-dimensional degree-$f$ Bayesian network 
over alphabet $\Sigma$ by an equivalent binary (i.e., alphabet of size $2$) Bayesian network of dimension 
$d \ceil{\log_2(|\Sigma|)}$ and degree $(f+1)\ceil{\log_2(|\Sigma|)}$.
Such a reduction can be found in \cite{CanonneDKS17} and we give a high-level description here.
Without loss of generality we can assume $|\Sigma| = 2^b$.
We will split each variable into $b$ bits, with each of the $2^b$ possibilities denoting a single letter in $\Sigma$.
Each new bit will potentially depend on other bits of the same variable,
as well as bits of the parent variables.
This operation preserves balancedness when $|\Sigma| = 2^b$,
and if $|\Sigma|$ is not a power of $2$ we need to first carefully 
pad the alphabet by splitting some letters in $\Sigma$ into two letters.

Our experiments for the ALARM network use this reduction.
ALARM has an alphabet of size $4$. The original dependency graph of ALARM has $37$ nodes, maximum in-degree $4$, and $509$ parameters;
and after the transformation, we get a binary-valued the network with $61$ nodes, maximum in-degree $7$, and $820$ parameters. 
}{}

\end{document}